\newcommand\fs@norules{\def\@fs@cfont{\bfseries}\let\@fs@capt\floatc@ruled
  \def\@fs@pre{}%
  \def\@fs@post{}%
  \def\@fs@mid{\kern3pt}%
  \let\@fs@iftopcapt\iftrue}
\newcolumntype{Y}{>{\centering\arraybackslash}X}
\newtheorem{theoremm}{Theorem}   
\newtheorem{eqed}{Example}     
\newtheorem {lemmaa}{Lemma}
\newtheorem {defnn}{Definition} 
\newtheorem {corollaryy}{Corollary}
\newtheorem {conjecturee}[theoremm]{Conjecture}
\newtheorem {procd}{Procedure}   
\newenvironment{defn}{\begin{defnn} \sl}{\end{defnn}}  
\newenvironment{proof}{\noindent {\bf Proof :\ } }{\hfill$\Box$ }
\journal{Physica D}
\begin{document}

\begin{frontmatter}
\title{Asynchronous Cellular Automata and \\Pattern Classification \thanks{This work is supported by DST Fast Track Project Fund (No.SR/FTP/ETA-0071/2011)}}   

\author[rvt]{Biswanath Sethi\corref{cor}}
\ead{sethi.biswanath@gmail.com}
\author[focal]{Souvik Roy}
\ead{svkr89@gmail.com}
\author[focal]{Sukanta Das}
\ead{sukanta@it.iiests.ac.in}

\cortext[cor]{Corresponding author}
\address[rvt]{Department of Computer Science Engineering \& Applications\\ Indira Gandhi Institute of Technology, Sarang\\ Odisha, India-759146}
\address[focal]{Department of Information Technology\\ Indian Institute of Engineering Science \& Technology, Shibpur\\ West Bengal, India-711103}

 

\begin{abstract}  
This paper designs an efficient two-class pattern classifier utilizing asynchronous cellular automata (ACAs). The two-state three-neighborhood one-dimensional ACAs that converge to fixed points from arbitrary seeds are used here for pattern classification. To design the classifier, we first identify a set of ACAs that always converge to fixed points from any seeds with following properties - (1) each ACA should have at least two but not huge number of fixed point attractors, and (2) the convergence time of these ACAs are not to be exponential. In order to address the first issue, we propose a graph, coined as {\em fixed point graph} of an ACA that facilitates in counting the fixed points. We further perform an experimental study to estimate the convergence time of ACAs, and find that there are some convergent ACAs which demand exponential convergence time. Finally, we find that there are 71 (out of 256) ACAs which can be effective candidates as pattern classifier.  We use each of the candidate ACAs on some standard data sets, and observe the effectiveness of each ACAs as pattern classifier. It is observed that the proposed classifier is very competitive and performs reliably better than many standard existing algorithms.
\end{abstract}  
\begin{keyword}
Asynchronous cellular automata (ACAs), convergence, fixed point attractor, fixed point graph, pattern classification.
\end{keyword}
\end{frontmatter}

\section{Introduction}   
\label{intro}  
Cellular automata (CAs) were introduced by Jon von Neumann in 1950s to primarily model biological self-reproduction \cite{Neuma66, Wolframbook1, LangtonIII, langton86, Burks70, chris2004, reggia1998}. However, these systems soon captured the attention of researchers due to their massive parallelism and ability of modelling physical systems \cite{Chopard, Codd68, palash3, Maxgarzon, mitchell93}. They are also proved to be computationally universal and more expressive than turing machine \cite{toffoli90, Thatcher, wolfram84b, jarkko2005, gutowitz91, cook2004}. A cellular automaton (CA) is defined by a lattice of cells and a local rule. The system evolves in discrete time and space, and all the cells follow same rule to generate its next state. The von Neumann's CAs were two dimensional with 29 states per cell and each cell is dependent on itself and its four non-orthogonal neighbors. The CA structure was later simplified by many researchers, and finally a two-state three-neighborhood CA structure was proposed on one-dimensional lattice \cite{Wolfr83, wolfram84b}. Since 1980s, the researchers have been showing that even these simple CAs can model complex systems \cite{Chopard, wolfram86, wolfram1984, gianluca1995, jarkko2005, gorodkin1993, hoekstra2010}. Such CA can find wide range of interesting applications such as modelling some natural growth processes like seashell patterns and snowflakes \cite{Meinhardt1995, patel2015}. 

However, investigations and studies on CAs are mostly centred around synchronous (deterministic) CAs where all the cells are updated simultaneously. In other words, like other synchronous systems, the CA assumes a global clock that forces the cells to get updated simultaneously. Though the synchronous CAs are proved to be good in modelling physical systems, the assumption of global clock is not very natural. On the other hand, in {\em asynchronous} CAs (ACAs), cells are independent and are, therefore, updated independently during the evolution of the system. Hence, the choice of ACAs with independent cell dynamics as models of physical systems are better \cite{Fates14, damien2009}. In recent times, there is a growing interest on ACAs since they can better model the natural systems \cite{bersinietal94, Inger84, Fates14, Nehaniv2003, Maji1, patel2015}. Researchers are more concerned in the properties of self-replications on ACAs \cite{ Huang5, Huang4, Huang3, Huang1, Ytakada}. Self-replication on ACAs allows avoidance of defective parts and simplifies programming of computers and therefore can be used in nano computers, where reconfigurability is an essential property. Also, the independently updated scheme of ACAs may be appropriate for modelling social networks or computer network communications. Although the applications of ACAs are limited, in a recent work, Bolt et al. \cite{Bolt2015}, from a theoretical as well as practical analysis on stochastic cellular automata, have provided a formal description of their properties suitable for applications in the domain of systems modelling without the need of a strong mathematical background.

The design of pattern classifiers using CAs are widely reported in literature where synchronous CAs have been used \cite{DasMNS09, das:modeling, NiloyPAMI, NiloyI, Maji1}. It should be mentioned here that, the issue of pattern classification was not previously tackled with ACAs. Therefore, the objective of this work is to design a two-class pattern classifier using ACAs. Here, for an ACA, we adopt one-dimensional two-state three-neighborhood CAs that are updated asynchronously.  In order to achieve the goal, we first characterized the ACAs. We observed that during their evolution, some of the ACAs converged to fixed points from an arbitrary seed. A set of configurations/states of an ACA that approach to a fixed point can be considered as the patterns of a single class. This idea is explored in developing real-life pattern classifier. A preliminary work on this is already reported in \cite{hpcs}.

We report the design of two-class pattern classifier in Section \ref{design}. Since we utilize ACA that always converges to a fixed point in the design of pattern classifier, our first task is to identify such ACAs. In two-state three-neighborhood one-dimensional system, there are 256 local rules \cite{wolfram86}. Convergence of a set of 64 (out of 256) rules under asynchronous update has been studied in \cite{nazim}. We explore here all of the 256 rules, and identify a set of ACAs (see Table \ref{AFP}) as convergent ACAs (Section \ref{converged}). However, a convergent ACA having a single fixed point can not act as two-class pattern classifier. Similarly, a convergent ACA having huge number of fixed points is not a good classifier. We, therefore, develop an algorithm to count fixed points of an $n$-cell ACA, and then decide which ACAs can and cannot act as good classifier (Section \ref{mulFP}).

It is further observed that some convergent ACAs take huge amount of time to converge. It is not practical to use these ACAs as real-life pattern classifier. In this work, we develop a method to estimate average convergence time of convergent ACAs. As a next step, we point out the convergent ACAs that are having exponential (average) convergence time, and exclude them from the list of candidate classifiers (Section\ref{ele}). Finally, we use each of the candidate ACAs on some standard data sets, and observe the effectiveness of each ACAs as pattern classifier (Section \ref{per}). The ACA that shows highest {\em efficiency} against a given data set, is considered as a classifier for that data set. We observe that the proposed classifier is very competitive and performs reliably better than many standard existing algorithms (see Table \ref{compdata}). Before going to address the issue we intend to present in brief the preliminaries of CAs in the next section.

\section{Cellular Automata Preliminaries}   
\label{dfe} 
       
A CA is a discrete dynamical system which evolves in discrete space and time. It consists of a lattice of cells, each of which stores a variable at time $t$, that refers to the present state of the CA cell \cite{Neuma66}. In this work, we consider, one-dimensional three-neighborhood binary CAs with periodic boundary condition, where the cells are arranged as a ring. The next state of each CA cell is determined as
\begin{equation}    
S^{t+1}_i = f({S^t_{i-1}},{S^t_i},{S^t_{i+1}})   
\end{equation}   
where $f$ is the next state function, ${S^t_{i-1}}$, ${S^t_i}$ and ${S^t_{i+1}}$ are the present states of the left neighbor, self and right neighbor of the $i^{th}$ CA cell at time $t$ respectively. A collection of (local) states $\mathcal{S}^t(S_1^t, S_2^t,\cdots,S_n^t)$ of cells at time $t$ is referred to as a configuration or a (global) state of the CA at $t$. The function $f:\{0,1\}^3\mapsto\{0,1\}$ can be expressed as a look-up table (see Table \ref{Trules}). The decimal equivalent of the 8 outputs is called `rule' \cite{wolfram86}. There are $2^8=256$ CA rules in two-state three-neighborhood dependency. Three such rules (40, 99 and 219) are shown in Table \ref{Trules}. First row of the table shows the possible combinations of present states (PSs) of left, self and right neighbors of a cell. Whereas, third, fourth and fifth rows show the next states (NSs) of corresponding PSs. The last column notes the rules.
  
\begin{defnn}               
The association of the neighborhood $ x,y,z $ to the value $ f(x,y,z)$, which represents the result of the updating function, is called {\em Rule Min Term} (RMT). Each RMT is associated to a number $ r(x,y,z)= 4x+2y+z $. 
\end{defnn}
The first row of Table \ref{Trules} shows the 8 possible RMTs of three-neighborhood CA.

A CA state can be viewed as a sequence of RMTs. For example, the state 1110 in periodic boundary condition can be viewed as $\langle3765\rangle$, where 3, 7, 6 and 5 are corresponding RMTs on which the transitions of first, second, third and fourth cell can be made. To get a sequence of RMTs for a state, we consider an imaginary 3-bit window that slides over the state. The window contains a 3-bit binary value which is equivalent to an RMT. To get the $i^{th}$ RMT, the window is loaded with $(i-1)^{th}$, $i^{th}$ and $(i+1)^{th}$ bits of the state. The window slides one bit right to report the $(i+1)^{th}$ RMT. Now the current content of the window is $i^{th}$, $(i+1)^{th}$ and $(i+2)^{th}$ bit of the state. In the sequence of RMTs, however, two consecutive RMTs are related. If 5 (101) is the $i^{th}$ RMT in some sequence, then $(i+1)^{th}$ RMT is either 2 (010) or 3 (011). Similarly, if 0 (000) or 4 (100) is the $i^{th}$ RMT, then 0 (000) or 1 (001) is the $(i+1)^{th}$ RMT. The relations of two consecutive RMTs in a sequence of RMTs are noted in Table \ref{nextrmt}. 

\begin{defn}
An RMT $r (x,y,z)$ of a rule is active if $f(x,y,z)\neq y $ and otherwise passive.
\end{defn}  
          
In rule 219, RMT 1 (001) is active and RMT 6 (110) is passive (see Table \ref{Trules}).
\begin{eqed}
As a proof of concept, consider the evolution of rule 219 ACA with only 4 cells. Assume the initial state is 0101 (Figure \ref{ACAtransition}(b)). Selecting the second cell to update (shown in figure \ref{ACAtransition}(b)) the next state is 0001 as RMT 2 is active for rule 219 ACA. Updating the 3rd cell in state 0001, we get the state 0011 and updating the 2nd cell again in state 0011, we reach at state 0111. Updating any cell in state 0111, the ACA remains in state 0111 forever, since RMTs 5, 3, 7 and 6 are passive for rule 219 ACA (see Table \ref{Trules}). 
\end{eqed}
 
\begin{table}  
\caption{Look-up table for rules 40, 99 and 219 }           
\centering               
\label{Trules}      
$\begin{array}{cccccccccc}               
{\rm PSs:}&111 & 110 & 101 & 100 & 011 &  010 &  001 & 000 & Rule  \\                  
(RMT) & (7) & (6) & (5) & (4) & (3) & (2) & (1) & (0) & \\                  
{\rm(i)NSs:} & 0 & 0&1 &0 & 1 & 0 & 0 & 0 & 40 \\        
{\rm(ii)NSs:} & 0 & 1&1 & 0 & 0 &0 & 1 &1& 99 \\                  
{\rm (iii)NSs:}& 1 & 1 & 0 &1 & 1 &0 & 1& 1& 219 \\   
\end{array}$                
\end{table} 

\begin{table}   
\caption{Relationship between $i^{th}$ and $(i+1)^{th}$ RMTs}   
\label{nextrmt}   
\centering 
\begin{tabular}{|c|c|}\hline 
$i^{th}$ RMT&$(i+1)^{th}$ RMT\\ \hline  
0 &0, 1 \\   
1 &2, 3 \\    
2 &4, 5 \\   
3 &6, 7 \\  
4 & 0, 1 \\
5  & 2, 3 \\
6 & 4, 5 \\
7 & 6, 7 \\
\hline   
\end{tabular}    
\end{table}  
 
Traditionally, all the cells of a CA are forced to get updated simultaneously. This constraint is relaxed in an asynchronous CA, where the cells can act independently. Though asynchronism is considered as an uncontrolled phenomenon, it is generally modelled as a stochastic process. {\em Fully asynchronous update}, where an arbitrary cell is updated at each step is one primary scheme to evolve an ACA \cite{nazim}. In this paper, we evolve ACAs under fully asynchronous update. Figure \ref{ACAtransition} shows the partial state transition diagram of a 4-cell rule 219 ACA. The cells, updated fully asynchronously during state transitions, are noted beside arrows. Note that each state of figure \ref{ACAtransition} converges to a {\em fixed point}.

During the evolution of an ACA, a sequence $(u_t)_{t\in {\mathbb{N}}}$ of cells can be observed where $u_t$ denotes the cell updated at time $t$. We call the sequence as {\em update pattern} \cite{tamc2014}. For an initial condition $ x $ and an update pattern $U$, the evolution of the system is given by the sequence of states $ (x^t) $ obtained by successive applications of the updates of $ U $. Formally, we have:
$ x^{t+1} = F(x^t, u_t) $ and $ x^0 = x $, with:
$$ x^{t+1}_i = 
\begin{cases}
f (x_{i-1}^t,x_i^t, x_{i+1}^t)&  \text{ if }  i = u_t \\
x^t_i  & \text{ otherwise.}
\end{cases}$$
This evolution can be represented in the form of {\em a state transition diagram}. For $x = 0000$ and $U = (1, 4, 3, 1, \ldots)$, the 4-cell rule 219 ACA converges to a {\em fixed point} (1011), which is shown in figure \ref{ACAtransition}(d).

\begin{defnn}    
A fixed point is an ACA state, next state of which is the state itself for any random update of cells. That is, if an ACA reaches to a fixed point, the ACA remains in that particular state forever.  
\end{defnn}  
 
\begin{figure}   
\centering     
\includegraphics[scale=0.5]{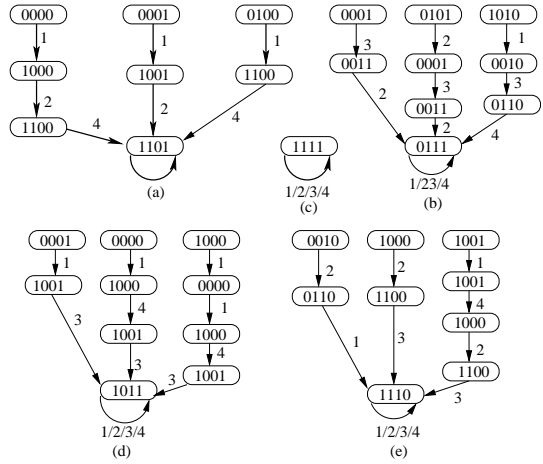}    
\caption{Partial state transition diagram for 4-cell rule 219 ACA.}  
\label{ACAtransition}    
\end{figure} 

In figure \ref{ACAtransition}, the states 1101, 0111, 1011, 1110 and 1111 are fixed points. The next state of 1111 is always 1111 for any random update of cells, because RMT 7 of rule 219 is passive. It can be observed that all the RMTs of a fixed point are passive. Hence, the following lemma can be obtained.  

\begin{lemmaa}   
\label{lmf}  
Rule $R$ ACA forms a fixed point with state $\mathcal{S}$ if all the RMTs of the RMT sequence of $\mathcal{S}$ are passive.  
\end{lemmaa}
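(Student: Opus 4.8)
The plan is to unwind the definitions and observe that the conclusion follows at once from the way a single cell is updated under fully asynchronous dynamics. The key observation is that, in the RMT sequence of $\mathcal{S}$, the RMT occupying position $i$ is exactly $r(S_{i-1},S_i,S_{i+1}) = 4S_{i-1}+2S_i+S_{i+1}$, that is, the encoding of the very neighborhood $(S_{i-1},S_i,S_{i+1})$ that governs an update of cell $i$. So the passivity of the $i^{th}$ RMT carries direct information about what happens when cell $i$ is selected for update.

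First I would fix an arbitrary cell $i$ and write down the single-cell update map from the definition of $F(\cdot,u_t)$: choosing $u_t = i$ in state $\mathcal{S}$ replaces $S_i$ by $f(S_{i-1},S_i,S_{i+1})$ and leaves every other coordinate unchanged. Next I would invoke the hypothesis: since the RMT at position $i$ is passive, the definition of a passive RMT gives $f(S_{i-1},S_i,S_{i+1}) = S_i$. Hence updating cell $i$ returns the value $S_i$, and the global state after this update is again $\mathcal{S}$.

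Since $i$ was arbitrary and the hypothesis applies to every RMT of the sequence, every single-cell update fixes $\mathcal{S}$. Finally I would lift this from one step to the entire evolution: for the initial condition $x^0 = \mathcal{S}$ and any update pattern $U=(u_t)_{t\in\mathbb{N}}$, a straightforward induction on $t$ using $x^{t+1}=F(x^t,u_t)$ shows $x^t = \mathcal{S}$ for all $t$, because at each step the selected cell is left unchanged and all others are unchanged by definition. By the definition of a fixed point, $\mathcal{S}$ is therefore a fixed point of the rule $R$ ACA.

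I do not anticipate a genuine obstacle; the statement is essentially a reformulation of the definitions of passive RMT and of the fully asynchronous update. The only point that merits care is ensuring that the indexing of ``the RMTs of the RMT sequence of $\mathcal{S}$'' is aligned so that position $i$ really corresponds to the neighborhood controlling cell $i$'s update, so that passivity translates exactly into $f(S_{i-1},S_i,S_{i+1})=S_i$; once that alignment is pinned down, the argument is immediate.
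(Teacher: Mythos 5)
Your proof is correct and follows exactly the reasoning the paper intends: the lemma is stated there as an immediate observation from the definitions (the paper gives no separate proof beyond noting that a passive RMT leaves the cell's state unchanged under update). Your careful unwinding --- passivity of the $i^{th}$ RMT gives $f(S_{i-1},S_i,S_{i+1})=S_i$, so every single-cell update fixes $\mathcal{S}$, and induction over the update pattern finishes it --- is the same argument, just written out in full.
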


Since there are five attractors in 4-cell rule 219 ACA (Figure \ref{ACAtransition}), we can find five basins of attraction. A fixed point is the representative of the corresponding attractor basin. 

\section{Design of pattern classifier}       
\label{design}  
An {\em n}-cell ACA with multiple fixed points can act as natural classifier. Each class contains a set of states that converge to a fixed point. To identify the class of patterns, the fixed points, representing the classes, need to be stored in memory (Figure \ref{archiACA}). For the identification of the class of an input pattern {\em p}, the ACA is loaded with {\em p} and continuously updated till it reaches to any fixed point. Then, from the fixed point and the stored information, one can declare the class of the input pattern {\em p}. In figure \ref{archiACA}, the class of {\em p} is I. However, if there are more than two fixed point attractors, then a set of fixed points identify the class.  
 
\begin{figure}[h]   
\centering      
\includegraphics[height=1.6in,width=3in]{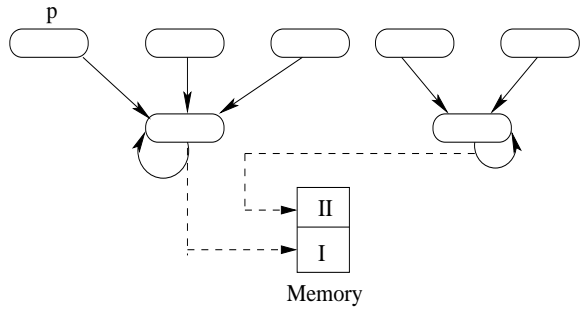}      
\caption{Multiple fixed point based classification strategy}   
\label{archiACA}   
\end{figure}  

\begin{eqed} 
\label{pe}
Let us show that the ACA of figure \ref{ACAtransition} can act as a two class classifier. Assume that the fixed points 0111, 1111 and 1110 are of class 1 and the rest 1011 and 1101 are of class 2.  Hence, the ACA of figure \ref{ACAtransition} can act as a two-class pattern classifier.  
\end{eqed} 

A proper distribution of the patterns among the CA attractor basins is necessary to design a CA based two-class classifier. This design theoretically requires a proper distribution of CA fixed points among the two pattern sets {\em $P_1$} and {\em $P_2$}. However, for real-life data-sets, the attractor basins may mix up the patterns of two classes. Therefore, the primary metric for evaluating the performance of classifier is the classification accuracy. It is measured as:\\
\begin{equation} 
\label{equat}  
Efficiency=\frac{\# ~patterns~ properly~ classified}{Total~ No.~ of~ patterns}\times 100~\%
\end{equation}   
 
\begin{eqed}   
Let us consider the ACA of figure \ref{ACAtransition} as a two-class classifier (see Example \ref{pe}). Assuming {\em $P_1$}=\{1111, 0011,1100, 0110\} and {\em $P_2$}=\{0000, 0001,1000\} are the two pattern sets. We now consider {\em $P_1$} as patterns of class 1, and {\em $P_2$} as patterns of class 2. However, the pattern 1000 is of class 2, but suppose it is wrongly identified by the classifier as in class 1. Therefore, out of 7 patterns only 6 patterns are properly identified. So the classification efficiency is 85.714\% for the above pattern set and the classifier. We can vary the efficiency by changing the pattern set and the ACA.
\end{eqed} 

The above discussion shows that ACAs that converge to fixed points during their evolution can act as a pattern classifiers. However, an arbitrary ACA can not be a two-class classifier. The following tasks are, therefore, identified to get appropriate candidates that can act as real-life pattern classifier.
\begin{enumerate}
\item All of the 256 ACAs do not converge to fixed points. So the first task is to identify the ACAs that converge to some fixed points from an arbitrary seed.
\item An ACA of the above set may have a single fixed point, to which all possible ACA states converge. Such an ACA can not act as two-class classifier. Similarly, an ACA with a huge number of fixed points can not be a good classifier. Since we store the fixed points, memory overhead is increased in that case. So, our next task is to find a subset of above set of ACAs that has atleast two but not a huge number of fixed points.
\item Convergence time of an ACA may be huge which can make the ACA inappropriate to be a practical pattern classifier. Our final task is, therefore, to exclude those ACAs from the above subset which requires exponential time to converge to a fixed point.
\end{enumerate}
The subsequent sections (Sections \ref{converged}, \ref{mulFP} and \ref{ele}) handle these issues to get a set of candidate ACAs as pattern classifiers. The performance of the proposed classifier is evaluated utilizing these candidate ACAs.

\section{ Identification of convergent ACAs }  
\label{converged} 
This section identifies those ACAs, which always converge to some fixed point attractors during their evolution. Following theorem states the condition of ACAs to be convergent ACAs.
 
\begin{theoremm}     
\label{fixed}      
Rule $R$ ACA converges to fixed point attractor if one of the following condition is satisfied:\\ (i) RMT 0 (resp. RMT 7) of $R$ is passive and RMT 2 (resp. RMT 5) is active.\\(ii) RMTs 0, 1, 2 and 4 (resp. RMTs 3, 5, 6 and 7) are passive and RMT 3 or 6 (resp. RMT 1 or 4 ) is active.\\ (iii) RMTs 1, 2, 4 and 5 (resp. RMTs 2, 3, 5 and  6) are passive.
\end{theoremm}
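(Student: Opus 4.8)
The plan is to recast ``converges to a fixed point from every seed'' as a reachability statement and then construct an explicit update schedule in each case. By Lemma~\ref{lmf}, a configuration is a fixed point exactly when every RMT of its RMT sequence is passive, so the fixed points are precisely the states forming singleton recurrent classes in the finite state-transition graph of the fully asynchronous dynamics. A terminal strongly connected component of that graph is closed under all cell choices; hence if from some state $c$ of a terminal component a fixed point $p$ is reachable by a finite sequence of updates, the whole path stays inside the component, so $p$ lies in it and, being a fixed point, equals it. Since every state flows into a terminal component, ``every terminal component is a fixed point'' is equivalent to ``from every configuration some fixed point is reachable'', and under uniform random cell selection this reachability yields almost-sure absorption by the usual estimate that the probability of avoiding an everywhere-reachable absorbing set decays geometrically. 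So my entire task reduces to the following: assuming one of (i)--(iii), steer the ACA from an arbitrary configuration, by a chosen sequence of cell updates, into an all-passive state.

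Before the three cases I would record a conjugacy. Flipping every cell value $0\leftrightarrow1$ conjugates the dynamics of rule $R$ to that of the complementary rule $\bar R(x,y,z)=\overline{R(\bar x,\bar y,\bar z)}$, carrying RMT $r$ of $R$ to RMT $7-r$ of $\bar R$ while preserving active/passive status. This involution turns the second clause of each ``(resp.)'' pair into the first clause for $\bar R$ (it matches $0$ with $7$, $2$ with $5$, the set $\{0,1,2,4\}$ with $\{3,5,6,7\}$, $\{3,6\}$ with $\{1,4\}$, and $\{1,2,4,5\}$ with $\{2,3,5,6\}$). Since $R$ and $\bar R$ converge together, it suffices to treat one representative of each pair and transport the conclusion through the conjugacy.

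For the representatives I would exhibit a monotone steering strategy governed by a potential measuring the ``active content'' of a configuration, using Table~\ref{nextrmt} to track that flipping one cell rewrites only the two neighboring RMTs. In case (iii) the passivity of RMTs $1,2,4,5$ confines every active RMT to $\{0,3,6,7\}$, and the target fixed points are the configurations that contain neither the block $11$ (which is what forces an active RMT among $3,6,7$) nor, when RMT~$0$ is active, the block $000$. Any maximal run of $1$s of length $\ge 2$ exposes RMT~$3$ at its left end and RMT~$6$ at its right end, and any $000$ exposes RMT~$0$; thus whenever the configuration is not yet of target form there is an active cell available, and the plan is to peel these runs and fill these gaps in a fixed scan order so that a lexicographic potential strictly decreases. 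Cases (i) and (ii) follow the same template: in (i), RMT~$0$ passive keeps $0$-regions inert while RMT~$2$ active lets isolated $1$s be erased; in (ii), the passivity of $0,1,2,4$ together with an active RMT~$3$ or $6$ lets $1$-blocks be eroded from their ends toward an all-passive state.

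The main obstacle is exactly the coupling recorded in Table~\ref{nextrmt}: because updating a cell changes the RMTs of both neighbors, a move that removes one active RMT can create another --- eroding a $1$-block can open a $000$, and when RMT~$5$ is active, filling a gap can recreate a $11$. The crux of each case is therefore to choose the scan order and the potential so that these induced changes cannot cycle, i.e.\ to prove a strict (well-founded, lexicographic) decrease at every step, since this is what actually guarantees termination \emph{at} an all-passive configuration rather than mere indefinite approach to one. I expect verifying this strict-decrease property, uniformly over all rules satisfying a given clause and over all ring lengths $n$, to be the delicate part of the argument.
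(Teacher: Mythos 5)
Your scaffolding is sound and in two respects tighter than the paper's own argument: the reduction of almost-sure convergence under random updates to ``a fixed point is reachable from every configuration'' (via terminal components of the finite transition graph) is exactly the right formalization of what the paper does implicitly, and the $0\leftrightarrow 1$ conjugacy sending RMT $r$ to RMT $7-r$ while preserving active/passive status is a clean way to dispose of every ``resp.''\ clause, which the paper handles only by saying the logic is similar. The problem is that after this setup you stop precisely where the theorem's content begins. Each clause constrains only two to four of the eight RMTs; the claim is that convergence holds \emph{whatever} the status of the remaining ones, and you defer the verification (``I expect verifying this strict-decrease property\ldots to be the delicate part''). That verification is not a routine check of a strategy you have already specified --- it forces a sub-case analysis on the free RMTs that your single ``monotone scan-order potential'' cannot absorb.

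Concretely, take clause (i) with RMT $0$ passive and RMT $2$ active, and suppose RMTs $3$, $6$, $7$ are all passive. Then a run of two or more consecutive $1$s can never be eroded: no cell inside or at the boundary of the run has an active RMT that turns it to $0$, so your plan of ``letting isolated $1$s be erased'' never gets to act on such a run. Convergence still holds, but for a different reason in each sub-case: if RMTs $1$, $4$, $5$ are also passive the current state (runs of length $\ge 2$ separated by $0$s) is already a fixed point by Lemma~\ref{lmf}; if one of RMTs $1$, $4$, $5$ is active the runs must instead be \emph{grown} until the all-$1$ fixed point (RMT $7$ passive) is reached. A potential that strictly decreases toward ``fewer ones'' in one sub-case must strictly increase in another, so no single well-founded order works uniformly over the clause; you must branch on which of the unconstrained RMTs are active, identify the correct target fixed point in each branch, and only then exhibit a terminating schedule. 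This branching is exactly the body of the paper's proof (its discussion of RMT $7$ active versus passive, then RMTs $3$, $6$ versus $1$, $4$, $5$), and it is the part your proposal leaves unproved. The same issue recurs in clauses (ii) and (iii), e.g.\ in (iii) the reachable target changes depending on whether RMTs $0$, $3$, $6$, $7$ are active or passive.
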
 

\begin{proof}        
\underline{Proof of case (i)}: Let us consider RMT 0 of $R$ is passive and RMT 2 is active. We shall show that the rule $R$ ACA can reach to a fixed point attractor from any initial state. Since RMT 0 is passive, the all-0 state (RMT sequence $\langle 00\cdots0\rangle$) is a fixed point attractor. In any other state (except all-1), a sequence of consecutive 1s guided by 0s can always be found.  
 
Consider, RMT 7 of $R$ is active. Now in such a state (like $\cdots0111110\cdots$), we can find RMT 7 in its corresponding RMT sequence. If a cell with RMT 7 is selected to update, then, the sequence of consecutive 1s is divided into two sub-sequences of consecutive 1s guided by 0s. The new sequences have less number of 1s (like $\cdots0111110\cdots\rightarrow\cdots0101110\cdots$). After a number of similar updates, we can get a state with a number of single 1 and two consecutive 1s guided by 0s. A cell has state 1 with left and right neighbor's states as 0s (010) implies that the cell can act on RMT 2. Since RMT 2 is active, all such cells can reach to state 0. So, finally, we get either the all-0 state (that is, the ACA is converged to all-0 state), or a state with two consecutive 1s  guided by 0s $(\cdots001100\cdots)$. In second case, the RMT sequence contains RMT 0 and RMTs 1, 3, 4 and 6. If RMTs 1, 3, 4 and 6 are passive, the state $\cdots0110\cdots$ itself is a fixed point attractor. Otherwise, the ACA can reach to all-0 (fixed point) state  after some update of cells. Hence, from an arbitrary state with sequences of consecutive 1s guided by 0s, the ACA can reach to a fixed point attractor. The all-1 state is a special state which contains no 0. However, if an arbitrary cell is updated, then we can get a 0, and then the new state can reach to fixed point attractor with the above logic.  

Now consider, RMT 7 is passive. Then, the all-1 state (RMT sequence $\langle77\cdots7\rangle$) is another fixed point attractor. A state with a sequence of consecutive 1s, guided by 0s, contains RMTs 3 and 6. If any one of them is active, the ACA with that state can reach to all-0 fixed point. If both RMTs (RMT 3 and 6) are passive but RMT 1, 4 or 5 is active, the ACA can reach to all-1 fixed point. If all RMTs except RMT 2 are passive, then the state itself is a fixed point attractor. Hence, the rule $R$ ACA that can reach to fixed point attractors from any initial state if RMT 0 of $R$ is passive and RMT 2 is active.  

While RMT 7 is passive and RMT 5 is active (and the rest RMTs are active or passive) it can be shown by similar logic that the rule $R$ ACA converge to fixed point attractors \\ 
  
\underline{Proof of case (ii)}: Consider RMTs 0, 1, 2 and 4 of $R$ are passive. Then, the states where two 1s are separated by at least two consecutive 0s (like $\cdots001000100\cdots$) are fixed point attractors, because the corresponding RMT sequences of these states contain only RMTs 0, 1, 2 and 4 (Lemma \ref{lmf}). Now consider a state which contains two or more consecutive 1s. Then, the corresponding RMT sequence of the state contains RMTs 3, 6 and 7 (along with other RMTs). If RMT 3 or 6 is active, the number of 1s can be reduced to a single 1 separated by 0s during evolution  of the ACA. The resultant state is a fixed point attractor if corresponding RMT sequence contains RMTs 0, 1, 2 and 4 only. If the resultant state (like  $\cdots001010\cdots$) contains any other RMT (RMT 5 in this case) which is active (except 3 or 6) then updating the cell with that active RMT, the state will have three consecutive 1s separated by 0s ($\cdots001110\cdots$). The RMT sequence of this state is now with RMTs 0, 1, 3, 7, 6 and 4. Now, updating the cell with RMT 3 or 6, the state reaches to a state with RMTs 0, 1, 2 and 4 only. Since RMTs 0, 1, 2 and 4 are passive, the state is a fixed point.
 
While RMTs 3, 5, 6, and 7 are passive, the states where two 0s are separated by at least two consecutive 1s (like $\cdots11011011\cdots$) are fixed point attractors because the corresponding RMT sequences of these states contain RMTs 3, 5, 6 and 7 only (Lemma \ref{lmf}). Now consider a state which contains two or more consecutive 0s. Therefore, the  corresponding RMT sequence of state contains RMTs 4, 0 and 1 (along with other). If RMT 1 or 4 is active, the number of 0s can be reduced to single 0 separated by 1s during the evolution of the ACA. The resultant state is a fixed point attractor if corresponding RMT sequence contains RMT 3, 5, 6, 7 only. If the resultant state contains any other RMT which is active (except RMT 1 or 4) then updating the cell with that active RMT, we can reach to a fixed point attractor.\\
  
\underline{Proof of case (iii)}: Consider  RMTs 1, 2, 4 and 5 of $R$ are passive. We  next show that from any initial state, the ACA can reach to a fixed point attractor. Let us consider RMTs 0 and 7 are active. Now, from all-0 state, updating cell with RMT 0 we can reach to the state $0101\cdots$ after number of updates. The state $0101\cdots$ is now with RMTs 2 and 5 only. This state is a fixed point attractor since RMT 2 and 5 are passive. The transitions are :~ $0000\cdots\rightarrow0100\cdots\rightarrow\cdots\rightarrow0101\cdots$. 

Similarly, from all-1 state, the ACA can reach to $0101\cdots01$ or $0101\cdots011$ (depending on the number of cells). The state $0101\cdots01$ is itself a fixed point attractor as RMTs 2 and 5 are passive and $0101\cdots011$ can be a fixed point attractor if RMTs 3 and 6 are passive. If RMT 3 or 6 is active, the ACA can obviously reach to a fixed point attractors by updating the cell with RMT 3 or 6. The transitions are: $0101\cdots011\rightarrow 0101\cdots001$. Now, it can easily be shown that the ACA can reach to a fixed point attractor from any state.  

Lastly, it can also be shown that rule $R$ ACA converges to fixed point attractors, if RMTs 2, 3, 5 and 6 are passive. We omit the detail steps of this proof because the rationale is similar with other cases.  
\end{proof}    

\begin{eqed}            
Let us consider the rule 40 ACA, in which RMTs 0, 1, 3 and 4 are passive (Table \ref{Trules}). Therefore, the rule satisfies the condition for convergence to fixed points (Theorem \ref{fixed} (i)). Here, we assume number of cells are 4 and the initial state is 1111. The ACA can reach to a fixed point attractor after some random update of cells from the initial state. One possible transition is: $1111 (2)\rightarrow1011 (1)\rightarrow0011 (4)\rightarrow0010 (3)\rightarrow0000$ (the cell updated in a step is noted in bracket).  
\end{eqed} 

There are 64 rules where the RMT 0 is passive and RMT 2 is active (Theorem \ref{fixed}(i)). 8 more rules can be identified where RMTs 0, 1, 2 and 4 are passive and RMT 3 is active (Theorem \ref{fixed}(ii)). In this way, we can get 146 ACAs (out of 256), which always approach to some fixed-point attractors. Such ACA rules are listed in Table \ref{AFP}. 
 
\begin{table}   
\caption{ACAs converge to fixed points}  
\label{AFP}   
\centering
\begin{tabular}{|cccccccccc|} \hline   
0&2&4&5&8&10&12&13&16&18  \\
24&26&32&34&36&40&42&44&48&50 \\
56&58&64&66&68&69&72&74&76&77 \\
78&79&80&82&88&90&92&93&94&95 \\
96&98&100&104&106&112&114&120&122&128\\
130&132&133&136&138&140&141&144&146&152 \\
154&160&161&162&163&164&165&166&167&168 \\
169&170&171&172&173&174&175&176&177&178 \\
179&180&181&182&183&184&185&186&187&188 \\
189&190&191&192&194&196&197&200&202&203 \\
204&205&206&207&208&210&216&217&218&219 \\
220&221&222&223&224&225&226&227&228&229 \\
230&231&232&233&234&235&236&237&238&239 \\
240&241&242&243&244&245&246&247&248&249\\
250&251&252&253&254&255&&&& \\  \hline 
\end{tabular}  
\end{table} 

\begin{corollaryy}
\label{cor}
An arbitrary state of an ACA, with multiple fixed point attractors, may converge to different fixed point attractors for different {\em update patterns}.
\end{corollaryy}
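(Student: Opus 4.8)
The plan is to read the corollary as a possibility (existence) statement — the word \emph{may} asserts only that such behaviour \emph{can} arise — so that a single explicit witness settles it. The obvious witness is already in hand: the 4-cell rule 219 ACA of Figure \ref{ACAtransition}, which is convergent by Theorem \ref{fixed}(iii) and which has been shown to carry several distinct fixed point attractors ($1101$, $0111$, $1011$, $1110$, $1111$). It therefore suffices to produce one configuration together with two update patterns that steer it into two different fixed points.

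First I would pick a seed that already sits at a branch point, i.e. a state having two or more active cells, and read off its active RMTs from the rule 219 look-up (Table \ref{Trules}, where RMTs $0,1,2,4$ are active and $3,5,6,7$ are passive). The state $0011$ serves well: its RMT sequence is $\langle 4136\rangle$, so only cells $1$ and $2$ (carrying RMTs $4$ and $1$) are active, while updating cell $3$ or $4$ leaves the state unchanged.

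Next I would follow the two branches and check that each terminates immediately at a fixed point. Updating cell $1$ sends $0011\to1011$, whose RMT sequence $\langle 6537\rangle$ is entirely passive, so $1011$ is a fixed point by Lemma \ref{lmf}. Updating cell $2$ instead sends $0011\to0111$, whose RMT sequence $\langle 5376\rangle$ is also entirely passive, so $0111$ is a second, distinct fixed point. Hence the single state $0011$ converges to $1011$ under any update pattern beginning $(1,\dots)$ and to the different attractor $0111$ under any pattern beginning $(2,\dots)$, which is exactly what the corollary claims.

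I do not expect a genuine obstacle here: the result is a direct structural consequence of the non-determinism in the transition $x^{t+1}=F(x^t,u_t)$, since the update pattern is free to choose which of several active cells fires first, causing the state transition diagram to branch. The only care needed — the mild ``hard part'' — is to exhibit a branch point whose two successors lie in \emph{different} attractor basins rather than reconverging later; choosing a seed both of whose relevant successors are \emph{themselves} fixed points, as above, removes even that concern and reduces the whole verification to the passive-RMT check of Lemma \ref{lmf}.
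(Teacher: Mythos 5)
Your proof is correct and is in the same spirit as the paper's treatment: the paper proves the corollary by an informal appeal to the proof of Theorem \ref{fixed} plus the observation that cells are updated randomly, and then immediately supplies a concrete witness as a worked example (a 6-cell ACA with RMTs 0 and 7 passive and 2 and 5 active, where the seed $101010$ is driven to $000000$ or to $111111$ by two different update patterns). You simply promote the witness to the status of the proof itself, which is arguably the more rigorous reading of a ``may''-type possibility claim: you exhibit a branch point ($0011$ in the 4-cell rule 219 ACA) both of whose active-cell successors are \emph{themselves} fixed points, verified directly by the passive-RMT criterion of Lemma \ref{lmf}, so no reconvergence argument is needed. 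Your RMT computations check out ($\langle 4136\rangle$ for $0011$, with only RMTs 4 and 1 active under rule 219; $\langle 6537\rangle$ and $\langle 5376\rangle$ entirely passive). The one slip is the attribution of rule 219's convergence to Theorem \ref{fixed}(iii): rule 219 has RMTs 0, 1, 2, 4 active and 3, 5, 6, 7 passive, so it falls under case (ii) (RMTs 3, 5, 6, 7 passive with RMT 1 or 4 active), not case (iii). This does not affect your argument, since the two target states are shown to be fixed points directly and Figure \ref{ACAtransition} independently records them as attractors, but the citation should be corrected.
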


\begin{proof}
From the concept of the proof of the Theorem \ref{fixed}, it has been proved that any arbitrary state of an ACA can converge to fixed point attractor depending upon the active and passive RMTs of that particular ACA. It has also been shown that an ACA state can converge to fixed point, by selecting cells with active RMTs (Theorem \ref{fixed} (case-1)). Since ACA cells are updated randomly, so an ACA may converge to different fixed points following different sequence of update of cells  for different run. Hence, ACA with multiple fixed points, may converge to different fixed point attractors.
\end{proof}

\begin{eqed}
This example illustrates the convergence of a state of an ACA to two different fixed points following two different update sequences of cells. Let us consider an ACA with RMT 0 and RMT 7 as passive and RMT 2 and RMT 5 as active. Since both the RMTs 0 and 7 are passive for the ACA, the all-0 state (with RMT 0 only) and all-1 state (with RMT 7 only) are fixed points. We will now show a state can converge to both all-0 and all-1 fixed points following two different sequences of updates of cells. Consider the state 101010  as the seed for 6-cell ACA. The state 101010 can converge to all-0 state, updating 1$^{st}$, 3$^{rd}$ and 5$^{th}$ cell. Similarly, the state 101010 can also converge to all-1 state updating 2$^{nd}$, 4$^{th}$ and 6$^{th}$ cell. Here, we get two different update patterns. These are (1, 3, 5) and (2, 4, 6) respectively. The detail transitions are as follows:
\begin{itemize}
\item 101010 (1) $\rightarrow$ 001010 (3) $\rightarrow$ 000010 (5) $\rightarrow$ 000000.
\item 101010 (2) $\rightarrow$ 111010 (4) $\rightarrow$ 111110 (6) $\rightarrow$ 111111.
\end{itemize}
The cells updated during transitions are noted in bracket right to the state.
\end{eqed}

\section{Counting of fixed points}
\label{mulFP}

This section reports a scheme of counting fixed points of an ACA, and then excludes the ACAs from Table \ref{AFP} which have-
\begin{itemize}
\item only one fixed point, or
\item huge number of fixed points.
\end{itemize}
We now propose a (directed) graph, named {\em fixed point graph} (FPG), that facilitates the counting of fixed points of a given ACA.

\subsection{Fixed Point Graph}
The FPG of an ACA is a directed graph, where the vertices represent the passive RMTs of the ACA rule. To get FPG for an ACA, a forest considering the passive RMTs as individual vertices is formed. Now, we put a directed edge from vertex {\em u} to vertex {\em v}, if {\em u} and {\em v} are related following Table \ref{nextrmt}. For example, if RMTs 1, 2 and 5 are passive, then we can draw directed edges from vertex 1 to vertex 2, vertex 2 to vertex 5 and vertex 5 to vertex 2. But we can not draw a directed edge from vertex 1 to vertex 5, as RMT 1 and RMT 5 are not related (see Table \ref{nextrmt}).

\begin{figure}   
\centering     
\includegraphics[width=2.3in]{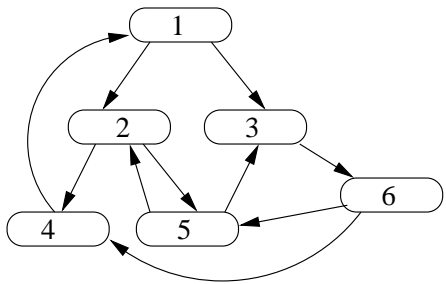}     
\caption{Fixed-point graph (FPG) of rule 77 ACA}  
\label{graphACA}  
\end{figure}   
         
\begin{eqed}
\label{exampleG}
This example illustrates the steps of constructing the FPG for rule 77 ACA. Here, the passive RMTs are 1, 2, 3, 4, 5 and 6. The vertices of the graph are 1, 2, 3, 4, 5 and 6 (see Figure \ref{graphACA}). Now considering the first vertex as RMT 1, we get the next RMTs of vertex 1 as RMT 2 and 3 (from Table \ref{nextrmt}). As these two RMTs are also vertices, we draw directed edges from vertex 1 to both the vertices 2 and 3 (see Figure \ref{graphACA}). Similarly, for vertex 3 the next possible RMTs are 6 and 7. But only one directed edge is possible from vertex 3 to vertex 6, since vertex 7 is absent (Figure \ref{graphACA}). After the  construction of directed edges for all the vertices, the graph is the desired FPG for rule 77 ACA (Figure \ref{graphACA}). 
\end{eqed}  

Using FPG, we can easily identify the fixed points of an ACA. To get a fixed point of an $n$-cell ACA, we start from a vertex of the FPG and check whether the vertex can be reached after visiting $n$ vertices (a vertex may be visited many times) including the starting vertex. If we can, then the sequence of vertices, visited, is the RMT sequence which represents a fixed point (Lemma \ref{lmf}).

\begin{eqed}
This example illustrates the steps for identifying fixed points for 4-cell rule 77 ACA.  Figure \ref{graphACA} is the FPG for rule 77 ACA. Starting from vertex 1, we can reach to vertex 1 again after visiting the intermediate vertices 3, 6 and 4 (see Figure \ref{graphACA}). So, the RMT sequence (1, 3, 4, 6) represents fixed point for the 4-cell rule 77 ACA, as total number of visited vertices is 4 (see Figure \ref{graphACA}). Similarly, starting from vertex 3 we can reach to vertex 3 again after visiting intermediate vertices 6, 4 and 1. Hence, the RMT sequence (3, 6, 4, 1) is another fixed point of the ACA. We can also get another two fixed points staring from vertex 6 (RMT sequence (6 ,4, 1, 3)) and vertex 4 (RMT sequence (4, 1, 3, 6)). We find two more fixed points starting from vertex 2 (RMT sequence 2, 5, 2, 5) and vertex 5 (RMT sequence (5, 2, 5, 2)). Hence, we identify total 6 fixed points of the 4-cell rule 77 ACA.
\end{eqed}

Therefore, fixed points of an ACA can be identified as well as counted utilizing the FPG. We can use the following recursive procedure for identifying/counting fixed points of an $n$-cell ACA--
\begin{itemize}
\item[] Start from each vertex, recursively visit next $n$ vertices. If the final vertex is the start vertex, consider that a fixed point is identified.
\end{itemize}
The above procedure correctly finds the number of fixed points of an $n-$cell ACA. However, it demands exponential time. Practically, for a moderate value of $n$, it is very difficult to count the number of fixed points.

\begin{figure*}[h]
\begin{center}
\includegraphics[width=4.8in, height=3in]{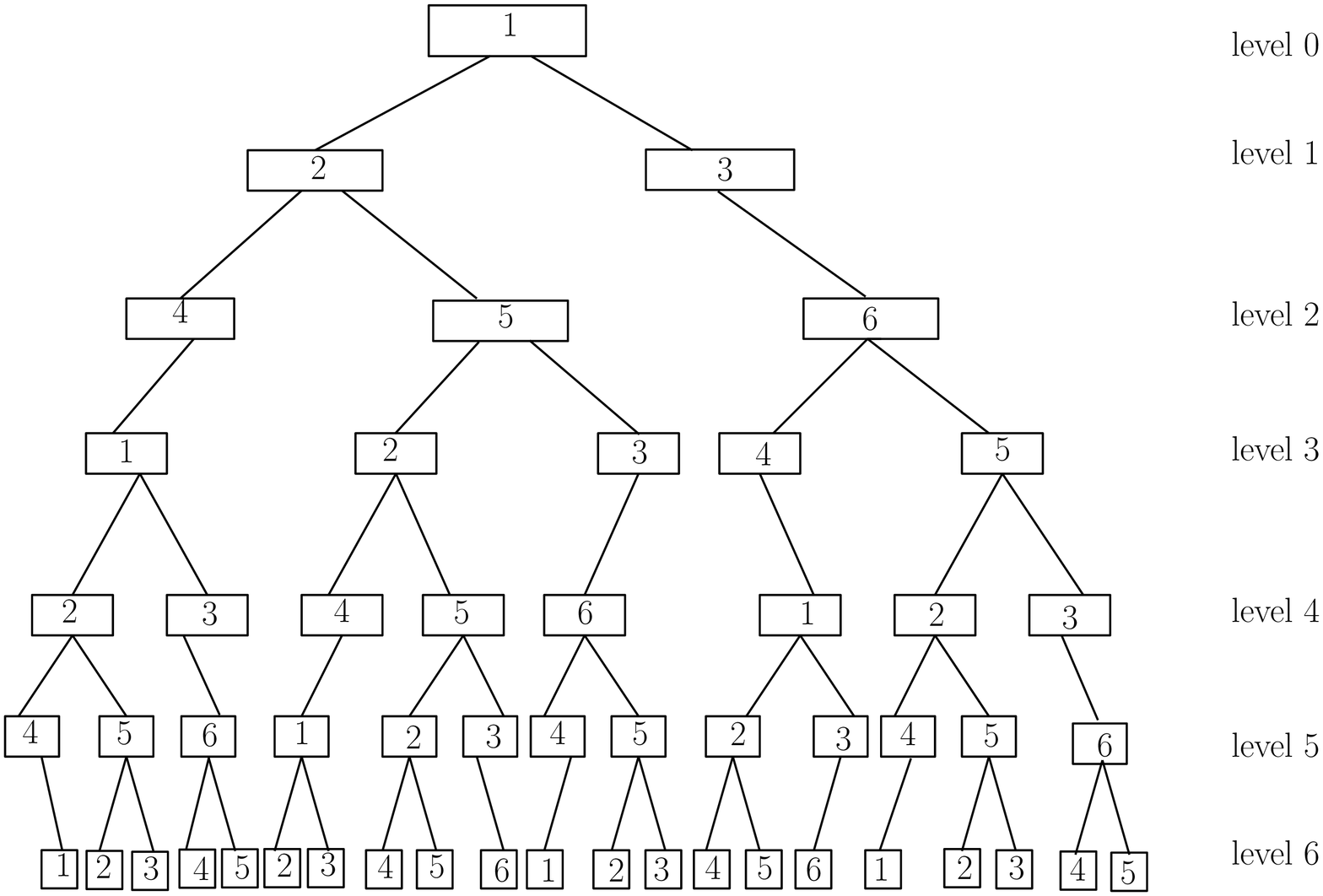}     
\caption{Recursion tree for 6-cell rule 77 ACA}
\label{tree1ACA} 
\end{center}  
\end{figure*}

\begin{figure*}
\begin{center}  
\includegraphics[width=4.8in, height=3in]{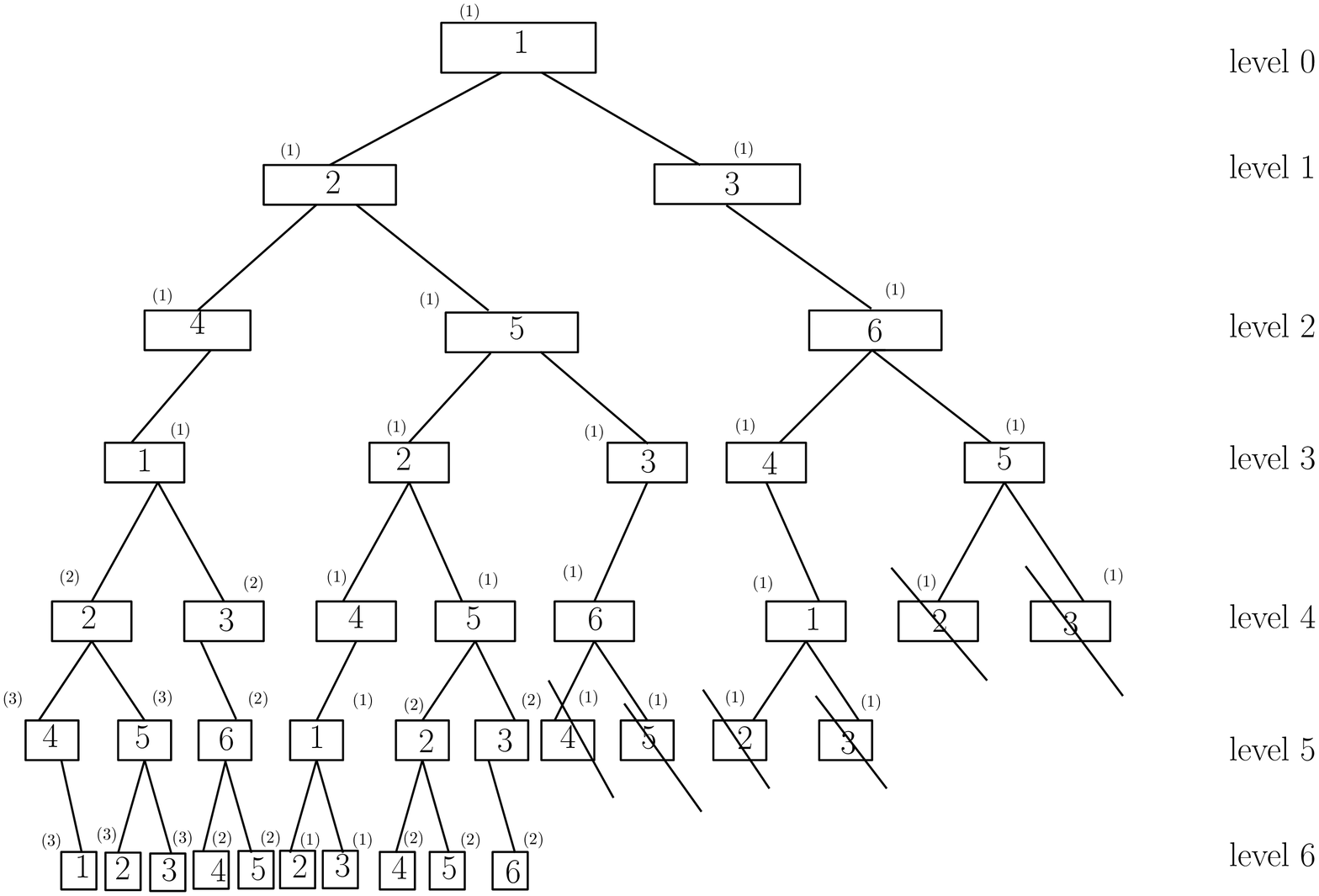}    
\caption{The tree for 6-cell rule 77 ACA after removing duplicate nodes}  
\label{treeACA} 
\end{center}  
\end{figure*} 

\subsection{The Counting Algorithm}

In order to understand the difficulty of the above scheme, we develop recursion tree of the procedure considering a node of the FPG as a starting vertex. Figure \ref{tree1ACA} shows the recursion tree of 6-cell rule 77 ACA. Obviously, the tree grows exponentially, which implies that the procedure demands exponential time to count the fixed points.

However, we observe that a node of the FPG can appear many times in a level of the tree. For example, in level 4 of figure \ref{tree1ACA}, node 2 and node 3 of figure \ref{graphACA} appear twice. However, the subtrees following the node 2 or node 3 are similar. In general, two subtrees, rooted at the same node of the FPG at any level, are similar. Moreover, two similar subtrees produce same number of fixed points. That is, if we get $x$ number of fixed points from one subtree, then another $x$ fixed points will be received from another similar subtree. For example, each of the subtrees following the node 2 at level 4 of figure \ref{tree1ACA} produces one fixed point. So, we need not to proceed with all duplicate nodes, but with a single one, after assigning appropriate {\em weight} to the node of the tree. This {\em weight} of nodes of the tree plays a crucial role to drastically reduce time complexity of the proposed counting algorithm.

However, the number of fixed points present in an $n$-cell ACA can be decided after reaching at leaves of the tree. If a node at level $n$ is same with the root, we consider a fixed point is received. In this way, we get three fixed points from figure \ref{tree1ACA} (for 6-cell rule 77 ACA). 

The sequence of nodes from the root to a leaf of the recursion tree can represent at most one fixed point. So, we initially consider that the weights of nodes of the tree are one. When we find a single node of the FPG appears twice or more in a level of the tree, we keep one node in the tree having its weight as sum of weights all of its duplicates and its own. Figure \ref{treeACA} shows the tree after assigning weights to the nodes of figure \ref{tree1ACA}. The weights are noted within brackets, and the duplicates are dropped in the figure. In figure \ref{treeACA}, there is only one leaf, which is same with the root (according to the FPG). The weight of the leaf is 3. Hence, we conclude that the tree can shown three fixed points.

In the proposed algorithm, we develop a tree, similar to recursion tree, considering each node of the FPG as the root. However, the algorithm does not store the whole tree. Rather, it stores the nodes of a single level. Initially, we assign the weight of the root as one. When a child is generated from a parent, the child copies the weight of its parent. As we have mentioned before, if we find duplicate nodes in a level of the tree, we proceed with one having weight equals to the summation of weights of its duplicates (including itself).

The algorithm (Algorithm \ref{algocount}) takes FPG of an ACA, and the number of cells ($n$) of the ACA as input, and outputs the number of fixed points, present in the $n$-cell ACA. The algorithm uses a variable, $FP$ (which is initialized to 0) to count the fixed points. However, for each vertex of the FPG,  we develop a tree (Step 1). Step 6 find the nodes of next level of the tree, whereas Step 8 removes the duplicates from the generated nodes. We use a variable, $m$ to note the number of unique nodes in a level. The variable $w_j$ stores the weight of $j^{th}$ ($1\le j\le m$) unique node (from the set of $m$ unique nodes). Finally, the algorithm reports the number of possible fixed points of the ACA.

\begin{algorithm}
 \caption{Count Fixed Points} 
 \begin{algorithmic}[1]
 \label{algocount}
 \renewcommand{\algorithmicrequire}{\textbf{Input:}}
 \renewcommand{\algorithmicensure}{\textbf{Output:}}
 \REQUIRE FPG of an ACA, ACA cell size ($n$)
 \ENSURE   Number of Fixed points
 \\ \textit{Initialization} : $FP$ = 0
  \STATE  For each vertex $u$ of the FPG, repeat Step 2 to Step 12 
  \STATE  $m$ = 1, $S_1$ = $u$, $w_1$ = 1 
  \FOR {$i = 1$ to $n-2$}
  \STATE $k$ = 0
   \FOR {$j =1$ to $m$}    
   \STATE For each $v$ when ($S_j$, $v$) exist, set $k$=$k$+1, $S_k'$ = $v$ and $w_k'$ = $w_j$
   \ENDFOR
   \STATE For any $p$ ($\leq$ $k$) and $q$ ($\leq$ $k$) when $p$ $\neq$ q and\\$S_p'$ = $S_q'$, remove $S_q'$ and set $w_p'$ = $w_p'$ +  $w_q'$
  \STATE Set $m$ as the number of vertices in $S'$
  \STATE Assign $S_k$ = $S_k'$ for all $k$= 1, 2, $\cdots$ $m$
  \ENDFOR
  \STATE  For each $j$ (1 $\leq$ $j$ $\leq$ $m$) if ($S_j$, $u$) exist, then $FP$ = $FP$ + $w_j$  
  \STATE  Print $FP$ as number of fixed points. 
 \end{algorithmic}
 \end{algorithm}
 
\begin{eqed}
This example illustrates the counting of fixed points of 6-cell rule 77 ACA using Algorithm \ref{algocount}. Considering node 1 of the FPG (Figure \ref{graphACA}) as the root, we get the tree of Figure \ref{treeACA}. For this tree, we get 3 fixed points. When node 2 is the root, we find another 6 fixed points. In this way, we can finally get that there are 20 fixed points in the 6-cell rule 77 ACA. 
\end{eqed} 

\subsection*{Complexity:} The complexity of Algorithm \ref{algocount} depends on a nested loop, which includes three {\bf for} loops at lines 1, 3 and 5. The {\bf for} loop at line 5 depends on ${\bf m}$, which is dependent on the number of vertices of the FPG. Since, the number of vertices of a given FPG is fixed (it can be at most 8), the time requirement of each execution of this loop is $O(1)$. However, the loop of line 3 is clearly dependent on $n$, which implies that, the time requirement of the steps 3 to 7 is  $O(n)$. Steps 8 to 10 require $O(1)$ time, and the most outer loop (at line 1) depends on the number of vertices of an FPG, which is a constant for a given FPG. Hence, complexity of Algorithm \ref{algocount} is $O(n)$.

Out of 256, 146 ACAs in two-state three-neighborhood interconnection that always approach towards fixed points are already listed in Table \ref{AFP} (Theorem \ref{fixed}). However, we can further identify a set of ACAs which are having multiple attractors. Algorithm \ref{algocount} guides us to identify such set of ACAs. There are 84 ACAs which are having multiple fixed points. We further eliminate ACAs 76, 140, 196, 200, 220, 205, 206 and 236 from the list of 84 ACAs. These ACAs have only one RMT active and rest 7 RMTs are passive. So, these ACAs have huge number of fixed points for a given $n$. These ACAs are not suitable candidates as pattern classifier. Rule 204 ACA is the special rule where all RMTs are passive. All the states of rule 204 ACA are fixed points. Hence, we eliminate rule 204 ACA as a pattern classifier. Table \ref{MACA} shows the list of ACAs with multiple fixed points after eliminating the above noted rules. 

\begin{table}   
\caption{ACAs with multiple fixed points}  
\label{MACA}    
\centering
\begin{tabular}{|cccccccccc|}\hline   
4&5&12&13&36&44&68&69&72&77 \\
78&79&92&93&94&95&100&104&128&130 \\
132&133&136&138&141&144&146&152&154&160 \\
162&164&166&168&170&172&174&176&178&180 \\
182&184&186&188&190&192&194&197&202&203 \\
207&208&210&216&217&218&219&221&222&223 \\
224&226&228&232&234&237&238&240&242&244 \\
246&248&250&252&254&&&&& \\  \hline 
\end{tabular}  
\end{table}  

However, we further identify and eliminate ACAs from Table \ref{MACA}, whose convergence time is exponential. These ACAs are not better candidate as pattern classifier. In the next section, we identify such ACAs and eliminate them.

\section{ACAs with exponential convergence time}  
\label{ele}

In this section, we report the method of finding the average convergence time of ACA.
In our earlier work \cite{automata2013}, we have experimentally studied the convergence time of ACAs. We have simulated ACA rules to find their average convergence time and studied the rate of growth of convergence time with respect to the size of automaton. However, we have not studied,  the average convergence time of ACA with different {\em update patterns} from a particular initial configuration. In this work, we consider different {\em update patterns} for finding of average convergence time of ACA from a particular initial configuration. 

\subsection{Experimental setup}
\label{ES}
The convergence time of an ACA depends on both, initial configuration and {\em update pattern}. During calculation of convergence time, however, we consider $n$ updates as a single time step.

In the reported experiment we use simple random sampling {\em with replacement} to estimate the population mean ($\mu$). We take a series of samples of size $m$ in the estimation process. Consider, ${\overline{X_k}}$ denotes the mean of $k^{th}$ sample, and ${\widehat{\overline{X_k}}}$ denotes the $k^{th}$ estimate to the population mean ($k\ge 1$). Obviously, ${\overline{X_k}}=\frac{1}{m}\sum_{i=1}^{m}{x_i}$, where $x_i$ is an element of the population chosen randomly and uniformly. However, the ${\widehat{\overline{X_k}}}$ is determined in the following way.
\begin{multline}
\label{method}
\begin{array}{l}
\widehat{\overline{X_1}}={\overline{X_1}} \\   
\widehat{\overline{X_2}}=\frac{{\overline{X_1}}+{\overline{X_2}}}{2}=\frac{1}{2}\widehat{\overline{X_1}}+\frac{1}{2}{\overline{X_2}}\\
\widehat{\overline{X_3}}=\frac{{\overline{X_1}}+{\overline{X_2}}+{\overline{X_3}}}{3}=\frac{2}{3}\frac{({\overline{X_1}}+{\overline{X_2}})}{2}+\frac{1}{3}{\overline{X_3}}=\frac{2}{3} \widehat{\overline{X_2}}+\frac{1}{3}{\overline{X_3}}\\
\dots\\
\widehat{\overline{X_k}}=\frac{k-1}{k}\widehat{\overline{X_{k-1}}}+\frac{1}{k}{\overline{X_1}}
\end{array}
\end{multline}
  
Since the mean of all possible samples' means is the population mean, the series $({\widehat{\overline{X_k}}})_{k\in {\mathbb{N}}}$ approaches to $\mu$. Population size in our study is generally large. So, neither consideration of all possible samples nor finding of $\mu$ is possible. We, therefore, declare ${\widehat{\overline{X_k}}}$ as our final estimate to the population mean if $\frac {|\widehat{\overline{X_k}}-\widehat{\overline{X_{k-1}}}|}{\widehat{\overline{X_k}}} < \delta$, where $\delta$ is a small threshold value. The $\delta$ specifies the precision we desire to achieve. We consider $\delta = 0.01$.

However, in our experiment estimation of sample size $m$ is an issue. Consider, we wish to control the {\em relative error} $r$ in ${\overline{X_k}}$, such that
\begin{equation*}
\Pr\left ( \left |\frac{{\overline{X_k}}-\mu}{\mu}\right |\ge r \right ) = \alpha
\end{equation*}
where $\alpha$ is a small probability. If we now assume that ${\overline{X_k}}$ is normally distributed, we can get the estimated value of $m$ \cite{William}.
\begin{equation}
\label{SampleSize}
  m = \frac{{t^2}{S^2}}{{r^2}{\mu^2}}       
\end{equation} 
where $S^2$ and $\mu$ are the population variance and mean respectively, and $t$ is a parameter related to $\alpha$. In our experimentation, we consider $\alpha=0.05$ which yields to $t=2$, and $r=0.1$ \cite{William}. However, $S^2$ and $\mu$ both are unknown in our case. So we need to again estimate these two parameters. The most reliable method to do this is, take a sample of size $m_1$ and estimate $S^2$ and $\mu$ for the use of Equation~\ref{SampleSize} \cite{William}. We have taken $m_1=100$ in this study. It may be noted that in all cases sample size is much less than population size.

Above method is used to estimate average convergence time of an ACA. It is worthwhile to mention that though we allow error with a small probability $\alpha$, the error in estimate, if any, is reduced due to the use of the method noted in Relation~\ref{method}. For each convergence time, we randomly and uniformly choose a configuration (with replacement) from $2^n$ possible configurations of an $n$-cell ACA, and then find the time using fully asynchronous updating scheme. Before estimating the average convergence time of an ACA having $n$ cells, however, we first estimate the sample size $m$. Following example illustrates the estimation of $m$ for an ACA.
\begin{eqed}  
Consider $m_1=100$, $t=2$ and $r=0.1$. To estimate sample size $m$ for ACA 226 with $n=20$, we first choose 100 configurations randomly and uniformly (with replacement) from $2^{20}$ possible configurations. Then, we find convergence time of ACA 226, updated under fully asynchronous mode, against each of the 100 configurations. We observe in an experiment that convergence time vary from 1 to 192 with estimated $S^2=1762$ and estimated $\mu = 41$. Hence, using Equation~\ref{SampleSize} we get $m=\frac {2^2\times1762}{(0.1)^2\times (41)^2}\approx 419$.
\end{eqed}

It is, however, already mentioned that convergence time of an ACA depends not only on initial configuration but also on {\em update pattern}. That is, for each initial configuration, we may get different convergence time for different {\em update patterns}. Since there are a huge number of possible ways for an ACA to converge from an arbitrary initial configuration, we again use the above method to estimate average convergence time for each initial configuration. Whole scheme that is used to estimate average convergence time of an $n$-cell ACA is noted in the following Algorithm \ref{algo}.
\begin{algorithm}
 \caption{Find Convergence Time }
 \begin{algorithmic}[1]
 \label{algo}
 \renewcommand{\algorithmicrequire}{\textbf{Input:}}
 \renewcommand{\algorithmicensure}{\textbf{Output:}}
 \REQUIRE ACA rule, $n$ (size of ACA)
 \ENSURE  Average convergence time 
 \\ \textit{Initialization} : $m_1 = 100$, $\delta = 0.01$
  \STATE Choose a configuration ($\mathcal C$) randomly and uniformly from $2^n$ possible configurations.
   \STATE Find convergence time of the ACA against $\mathcal C$ updated under fully asynchronous mode.
  \STATE Repeat {\em Steps} 1 and 2 for $m_1$ times, and estimate variance ($S^2$) and mean ($\mu$) to convergence time. 
    \STATE Estimate sample size $m$ using Equation~\ref{SampleSize}. Consider $t=2$ and $r=0.1$.
    \STATE Choose a configuration ($\mathcal C$) randomly and uniformly from $2^n$ possible configurations.
    \STATE Find convergence time for $m_1$ times with the same initial configuration ($\mathcal C$). That is, we are considering $m_1$ update patterns for the $\mathcal C$. Estimate sample size $m'$ like {\em Step 4}.
    \STATE Find convergence time for $m'$ times with $\mathcal C$. Get the mean value of this sample.
    \STATE Get another mean using {\em Step 7}. If the difference between two consecutive means is less than $\delta\times \mbox{last mean}$, we consider the last mean as the average convergence time of the ACA for the $\mathcal C$. Otherwise, repeat {\em Step 8}.
    \STATE Repeat {\em Steps} 5 to 8 for $m$ times to get $m$ convergence time. Calculate the average of these convergence time.
   \STATE Repeat {\em Step 9} to get another average convergence time. If the difference between two consecutive average convergence time is less than $\delta\times \mbox{last average time}$, we consider the last average time as the estimated convergence time. Otherwise, repeat {\em Step 10}.\\
\STATE  Output the estimated convergence time.                
\end{algorithmic}
\end{algorithm}

\subsection{The results} 
\label{MCT}
We now use Algorithm~\ref{algo} to find average convergence time of all 
146 ACAs of Table \ref{AFP}. Since we are interested to establish a relation between convergence time and $n$, the ACA size, we find a series of average convergence time of an ACA for different values of $n$. As $n$ increases, average convergence time of an ACA (except ACA 204) increases. To find the rate of growth of convergence time, we use the {\em empirical curve bounding technique} \cite{Catherine}. However, we estimate here the upper bound of convergence time. We declare that the convergence time $T(n)$ has upper bound $g(n)$ (we write $T(n)=O(g(n))$) if there exist two positive constants $c$ and $n_0$ such that $0\leq T(n)\leq c~g(n)$, for all $n\geq n_0$ \cite{Catherine}. 

To estimate the upper bound, we assume that the convergence time ($t$) follows power rule, that is, $t \approx kn^a$ \cite{Catherine}. We next approximate the coefficient $a$ by taking empirical measurements of time $t_1$ and $t_2$ for ACA size $n_1$ and $n_2$ respectively. Hence, we can get $\frac {t_2}{t_1}\approx (\frac {n_2} {n_1})^a$, and
\begin{equation}
\label{rate}
  a\approx \frac{\log (t_2/t_1)}{\log(n_2/n_1)}
\end{equation}
Now, our task is to experimentally find a series of $t$ values for different $n$, and then using Equation~\ref{rate} we estimate the $g(n)$.

Tables~\ref{logn} -- \ref{exponential} show the experimental results of 13 representative ACAs. The tables show the convergence time ($t$) of ACAs for different $n$, and values of $a$ (Equation~\ref{rate}). Note that the ACAs of each table have similar convergence time.

For Table~\ref{n2}, the value of $a$ for each ACA is nearly 2 after some value of $n$. So we estimate $g(n)=n^2$. Hence we get the average convergence time of these ACAs as $O(n^2)$. This is validated in figure~\ref{figrate}(a), where the red curve denotes $0.2~n^2$ (that is, $c=0.2$) and other 6 curves are for 6 ACAs -- 138, 146, 170, 178, 194 and 226.

In Table~\ref{logn}, the $a$ values are ever decreasing and they are generally less than 1. The trend of $a$ values suggests that for these ACAs of Table~\ref{logn}, $g(n)=\log n$. We always consider that the base of $\log$ is 2. So average convergence time of these ACAs is $O(\log n)$. We validate this estimation in figure~\ref{figrate}(c), where the red curve shows the upper limit which is $3.5\log n$ (that is, $c=3.5$), and other 6 curves are for 6 ACAs -- 130, 192, 202, 206, 234 and 242. 

Table~\ref{exponential} contains only ACA 210, whose rate of growth of convergence time is high. Our guess is, convergence time for this case is exponential. To validate this guess, we take log of convergence time, and replace the $t$ values of Equation~\ref{rate} by log of $t$ values. So we get a new $a$ values, say they are $a'$. Table~\ref{exponential} shows both the values, $\log t$ and $a'$. After some values of $n$, $a'$ is nearly 1. So the value of $\log t$, almost linearly increases with $n$. Hence, we estimate $g(n)=2^n$. That is, average convergence time is $O(2^n)$. This is validated in figure~\ref{figrate}(b).

\begin{figure*}
\centering  
\begin{tabular}{@{\hspace{-0.1in}}c@{\hspace{-0.2in}}c@{\hspace{-0.2in}}c}     
\includegraphics[scale=0.45]{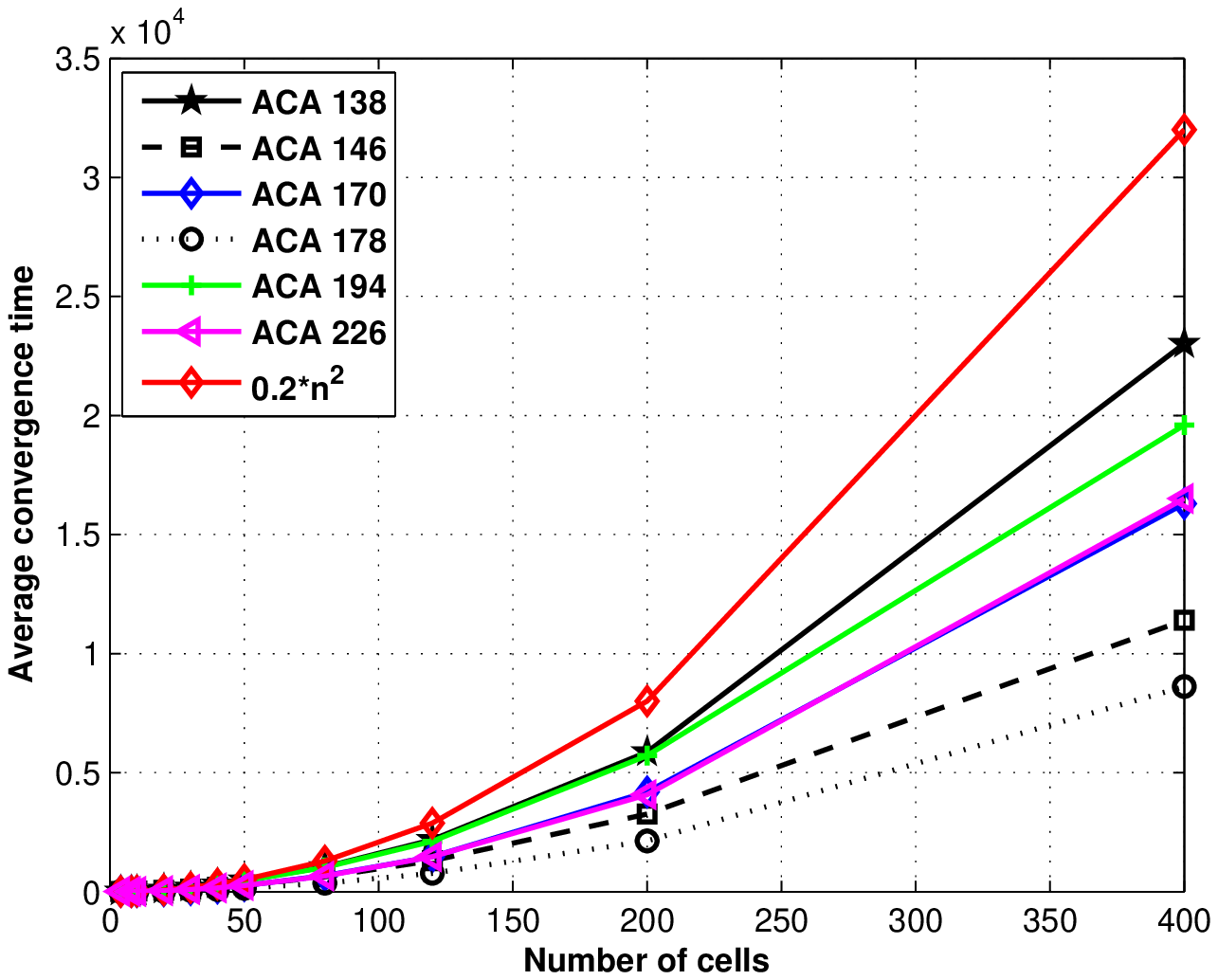} & \includegraphics[scale=0.45]{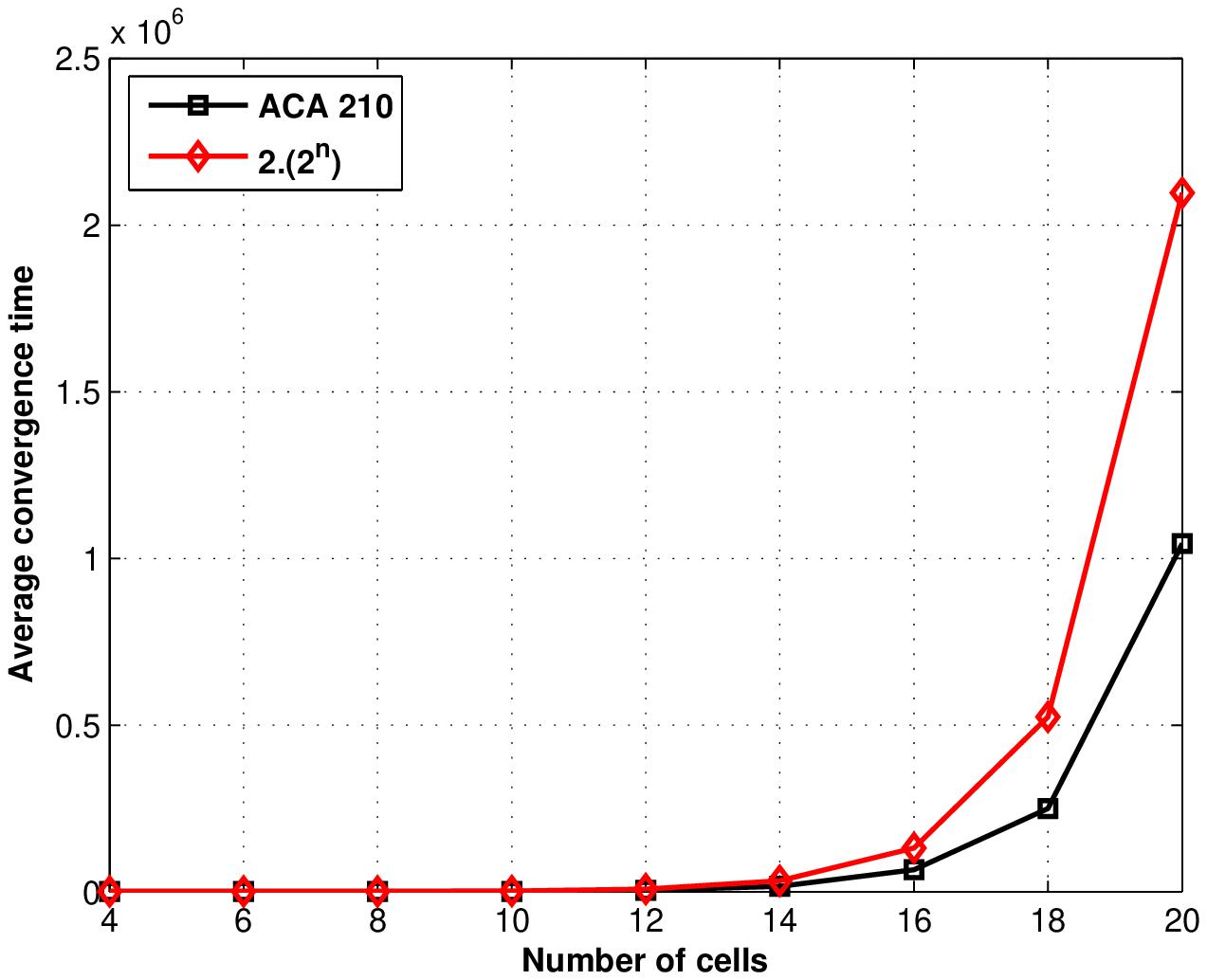} \\
(a)&(b)\\
\multicolumn{2}{c}{\includegraphics[scale=0.43]{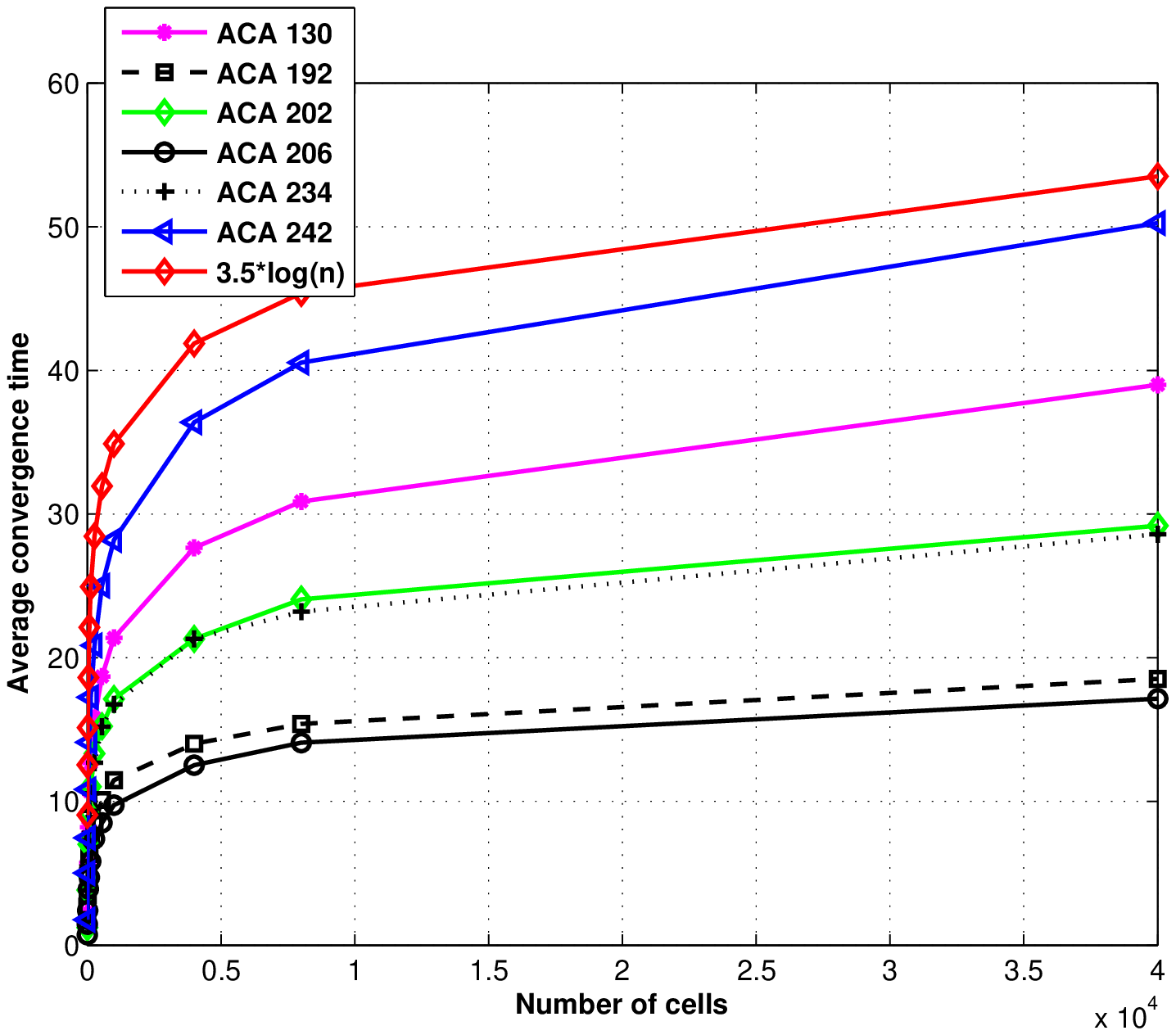}} \\
\multicolumn{2}{c}{(c)} 
\end{tabular} 
\caption{Convergence time of ACAs having upper bound  (a) $n^2$ (b) $2^n$ (c) $\log n$} 
\label{figrate}  
\end{figure*} 
The average convergence time for rule 204 ACA is constant as all the RMTs for rule 204 ACA are passive. Hence, the convergence time for rule 204 ACA is $O(1)$. The rest 145 ACA rules are categorized in 4 classes based on their upper bound of convergence time. All 146 ACA are listed with their rate of growth of convergence time in the first column of Table \ref{ctime}. Therefore, there are mix ACAs--90, 122, 154, 161, 165, 166, 180 and 210, which are having exponential average convergence time. We consider none of these ACAs as potential classifier.

\begin{table*}[h]
\caption {ACAs with convergence time $O(\log n)$}
\centering
\label{logn}
\begin{tabularx}{0.98\textwidth}{|c *{7}{|Y}|}
\cline{1-7}
 \multicolumn{1}{|c|}{\# cells} 
 & \multicolumn{2}{c|}{ACA 130}  
 & \multicolumn{2}{c|}{ACA 192}
  & \multicolumn{2}{c|}{ACA 202} \\ \cline{2-7}       

 n & $t$ & $a$ & $t$ & $a$ & $t$ & $a$ \\ \hline
6 &2.27&  &1.74& &1.22 &     \\  \hline 
12 & 4.02 & 0.82 & 2.66 & 0.83 &3.83&1.65 \\ \hline
20&5.72&0.69&3.49&0.39& 4.89&0.47 \\ \hline
40&8.20&0.51&4.99&0.51&7.01&0.51  \\ \hline
80&10.71&0.38&6.35&0.34&9.12&0.37 \\ \hline
140&12.66&0.29&7.36&0.26&11.02&0.33 \\ \hline
280&15.80&0.31&8.44&0.19&13.34&0.27  \\ \hline
560&18.70&0.24&10.11&0.26&15.24&0.19 \\ \hline
1000&21.37&0.23&11.49&0.22&17.12&0.20 \\ \hline
4000&27.65&0.18&14.01&0.19&21.32&0.15  \\ \hline
8000&30.87&0.15&15.39&0.13&24.07&0.17  \\ \hline
40000&39.00&0.14&18.53&0.12&29.20&0.12  \\ \hline
 \multicolumn{1}{|c|}{\# cells} 
 & \multicolumn{2}{c|}{ACA 206}  
 & \multicolumn{2}{c|}{ACA 234}
  & \multicolumn{2}{c|}{ACA 242} \\ \cline{2-7}      
n & $t$ & $a$ & $t$ & $a$ & $t$ & $a$ \\ \hline
6 & 0.73 &  &1.65&  &1.76 &   \\ \hline
12 &1.46& 1 &3.78&1.19&5.03&1.51 \\ \hline 
20&2.4&0.97&5.25&0.64&7.46&0.77  \\ \hline
40 &3.91&0.70&7.29&0.47&10.85&0.54 \\ \hline
80 &4.72&0.27&9.34&0.35&14.11&0.37 \\ \hline 
140&5.84&0.38&10.58&0.22&17.24&0.35 \\ \hline
280&7.39&0.33&12.68&0.26&20.87&0.27 \\ \hline
560 &8.52&0.20& 1521 &0.26&25.01&0.26 \\ \hline
1000&9.74&0.23&16.74&0.16&28.20&0.20 \\ \hline
4000&12.53&0.18&21.32&0.17&36.38&0.18 \\ \hline
8000 &14.08 & 0.16 & 23.21 & 0.12&40.54 & 0.15 \\ \hline
40000 &17.16 & 0.12 & 28.59 &0.12 &50.25 &0.13 \\ \hline
\end{tabularx}
\end{table*}  
 
\begin{table*}[h]
\caption {ACAs with convergence time $O(n^2)$}
\centering
\label{n2}
\begin{tabularx}{0.98\textwidth}{|c *{7}{|Y}|}
\cline{1-7}
 \multicolumn{1}{|c|}{\# cells} 
 & \multicolumn{2}{c|}{ACA 138}  
 & \multicolumn{2}{c|}{ACA 146}
  & \multicolumn{2}{c|}{ACA 170} \\ \cline{2-7}       

 n & $t$ & $a$ & $t$ & $a$ & $t$ & $a$ \\ \hline
4&3.080&  & 2.39& & 0.53 &  \\ \hline
8&16.010&2.41&11.14&2.22&5.47&3.36  \\ \hline
10&24.049&1.82&16.93&1.87&8.90&2.18 \\ \hline
20&89.54.&1.89&54.57&1.68&39.02&2.13 \\ \hline
30&177.669&1.68&109.01&1.70&89.61&2.05 \\ \hline
40&229.929&1.82 &185.17&1.84&162.11&2.06 \\ \hline
50&466.399&1.97&279.0&1.83&249.88&1.93 \\ \hline
80&1032.280&1.69&618.94&1.69&671.96&2.10 \\ \hline
120&2177.929&1.84&1298.06&1.83& 1476&1.94  \\ \hline
200&5867.109&1.93&3255.46&1.79&4182&2.03 \\ \hline
400&23018.94&1.97&11401.88&1.80&16294&1.96 \\ \hline
\multicolumn{1}{|c|}{\# cells} 
 & \multicolumn{2}{c|}{ACA 178}  
 & \multicolumn{2}{c|}{ACA 194}
  & \multicolumn{2}{c|}{ACA 226} \\ \cline{2-7}      
n & $t$ & $a$ & $t$ & $a$ & $t$ & $a$ \\ \hline
4&0.53& &2.84& &0.53& \\ \hline
8& 4.13&2.96&15.4&2.43&5.44&3.35\\ \hline
10&6.74&2.19&22.15&1.62&8.94&2.22\\ \hline
20&24.78&1.87&82.15&1.89&38.90&2.12\\ \hline
30&54.57&1.94&173.96&1.85&88.12&2.00\\ \hline
40&94.69&1.91&283.64&1.69&161.02&2\\ \hline
50&146.50&1.995&459.60&2.16&263.66&2.20\\ \hline
80&363.32&1.93&1014.91&1.68&658.77&1.94\\ \hline
120&777.40&1.87&2121.05&1.81&1470&1.97\\ \hline
200&2129.06&1.97&5720.00&1.94&4064&1.99\\ \hline
400&8615.53&2.01&19601.85&1.77&16514.31&2.02\\ \hline
\end{tabularx}
\end{table*}  
 
\begin{table}[h]
\caption{ACA with convergence time $O(2^n)$}
\centering    
\label{exponential} 
\scalebox{1.0}{ 
\begin{tabular}{|c|c|c|c|c|}\hline   
\multicolumn{5}{|c|}{ACA 210} \\ \hline
\# cell & $t$& $a$ & $\log t$& $a'$ \\  
n & & & & \\ \hline
4 &6.75&  & 2.75&  \\ \hline
6&44.66&4.66&5.48&0.58 \\ \hline
8&210.72&5.39&7.71&0.84 \\ \hline
10&889.51&6.45&9.79&0.93 \\ \hline
12&3629.33&7.71&11.82&0.96 \\ \hline
14&15409.08&9.37&13.91&0.94 \\ \hline
16&65297.30&10.81&15.99&0.95\\ \hline
18&249072.09&11.36&17.92&1 \\ \hline
20&1044845.12&13.60&19.99&0.96 \\ \hline 
\end{tabular} }    
\end{table}  

\begin{table}[h]
\caption{Convergence time of ACAs}  
\label{ctime}    
\centering 
\begin{tabular}{|c|c|} \hline   
Rate of & ACAs\\ 
growth& \\ \hline 
$O(1)$&204 \\ \hline
 $O(log~n)$   & 0, 2, 4, 5, 8, 10, 12, 13,  16, 18, 24,\\
              & 32, 34, 36, 40, 42, 44, 48, 50, 56, 64, 66,\\
              & 68, 69, 72, 76, 78, 77, 79, 80, 92, 93, 94, \\
              & 95, 96, 98, 100, 104, 112, 128, 132, 133 ,130,\\
              & 136, 140, 141, 144, 160, 162, 164, 168, 171, 172,\\
              & 175, 176, 179, 183, 185, 186, 187, 189, 190, 191,\\
              & 192, 196, 197, 200, 202, 203, 205, 206, 207, 216, \\
              & 217, 218, 219, 220, 221, 222, 223, 224, 227, 228, \\
              & 231, 232, 233, 234, 235, 236, 237, 238, 239, 241, \\
              & 242, 243, 245, 246, 247, 248, 249, 250, 251, 252,  \\
              & 253, 254, 255  \\  \hline
$O(n^{1/2})$  & 26, 58, 74, 82, 88, 106, 114, 120, 163, 167, \\
              & 169, 173, 177, 181, 225, 229 \\ \hline
$O(n^2)$ & 138, 146, 152, 170, 174, 178, 182, 184, 188, 194, \\
         & 208, 226, 240, 244, 230  \\ \hline
$O(2^n)$ & 90, 122, 154, 161, 165, 166, 180, 210 \\ \hline
\end{tabular}  
\end{table} 

\begin{table}[H]   
\caption{Candidate ACAs as pattern classifier}  
\label{FLP}    
\centering
\begin{tabular}{|cccccccccc|}\hline   
4&5&12&13&36&44&68&69&72&77 \\
78&79&92&93&94&95&100&104&128&130 \\
132&133&136&138&141&144&146&152&160&162 \\
164&168&170&172&174&176&178&182&184&186 \\
188&190&192&194&197&202&203&207&208&216 \\
217&218&219&221&222&223&224&226&228&232 \\
234&237&238&240&242&244&246&248&250&252  \\
254&&&&&&&&& \\  \hline 
\end{tabular}  
\end{table}

\section{Performance of Pattern Classifier}    
\label{per} 
The design of ACA based pattern classifier is already reported in Section \ref{design}. It has been pointed out that an ACA can act as pattern classifier if - (1) the ACA converges to a fixed point from an arbitrary seed, (2) the convergent ACA is having at least two but not huge number of fixed points, and (3) the convergence time is not exponential. Table \ref{FLP} shows the list of ACAs (71 ACAs) which satisfy the above demands. However, this section presents the performance analysis of these ACAs as pattern classifiers. We have considered standard and widely accepted data sets available at {\em http://www.ics.uci.edu/$\sim$mlearn/MLRepository.html} for the study of efficiency of the classifier. To understand the effectiveness of the classifier, it must be passed through two phases--{\em training phase} and {\em testing phase}.  
 
\subsection{Training Phase} 
All the 71 candidate ACAs can act as pattern classifier. However, to find the best effective classifier, we need to train all the candidate ACAs in this {\em training phase}. All these ACAs are trained with patterns of two different data sets. We use Algorithm \ref{algocla} for the training of all these candidate ACAs. The ACA with highest efficiency is considered as the desired classifier. It takes two data sets ({\em$P_1$} and {\em $P_2$}) and a list of candidate ACAs (Table \ref{FLP}) as its input. In this training phase, an ACA of Table \ref{FLP} is loaded with patterns of two different data sets ({\em$P_1$} and {\em $P_2$}) and updated till the ACA converge to a fixed point attractor. We store all the attractors and count the number of patterns converged to them. An attractor is declared as of class 1 if more patterns from pattern set {\em$P_1$} than that of {\em$P_2$} are converged to the attractor and the attractor is stored in {\em attrset-1}, otherwise the attractor is of class 2 and stored in {\em attrset-2}. The efficiency of an ACA is determined using the following: 
 
\begin{equation}
\label{effi}
Efficiency=\frac{\sum\limits_{i=1}^m \max(n_{1}^i, n_{2}^i)}{|P_1| + |P_2|}.
\end{equation}

Where, $n_{1}^i$ and $n_{2}^i$ are the maximum number of patterns converged to the $i^{th}$ fixed point attractor of an ACA from data set $P_1$ and $P_2$ respectively. $|P_1|$ and $|P_2|$ are the number of patterns of two data sets used for the pattern classification. The ACA with highest efficiency  and {\em attrset-1} and {\em attrset-2} for the ACA are output of Algorithm \ref{algocla}. 

A list of candidate ACAs with their efficiencies are reported as an example of training in Table \ref{perCA}. Monk-1 data set is considered for this training. It has been obvious that rule 4 ACA is the classifier for Monk-1 data set (Table \ref{perCA}). Since the cells of an ACA are updated randomly, and an ACA state may converge to different fixed point attractors (Corollary-\ref{cor}), so the efficiency of an ACA may vary for different run of the classifier. Hence, the efficiencies shown in Table \ref{perCA} may change for different run of the classifier algorithm (Algorithm \ref{algocla}). To understand the effectiveness of the desired classifier, we need to test the classifier with a new set of patterns. This phase is commonly known as testing phase.


\begin{algorithm}
 \caption{Training Classifier}
 \begin{algorithmic}[1]
 \label{algocla}
 \renewcommand{\algorithmicrequire}{\textbf{Input:}}
 \renewcommand{\algorithmicensure}{\textbf{Output:}}
 \REQUIRE Table \ref{FLP}, $n$ (Size of ACA), Two pattern sets $P_1$ and $P_2$.
 \ENSURE  ACA as classifier, {\em attrset-1} and {\em attrset-2}.
 \\ \textit{Initialization} : Success= 0
  \STATE For each ACA, $A$  $\in$ Table \ref{FLP} repeat {\em Step 2} to {\em Step 14}.
  \STATE Repeat {\em Step 3} and {\em Step 4} for each pattern $p$ of $P_1$ and $P_2$.
   \STATE Load $A$ with  $p$.
    \STATE Run $A$ until it reaches to any fixed point attractor, $attr$.
     \STATE Suppose $n_1$ and $n_2$ are the number of patterns from $P_1$ and $P_2$ respectively mapped into an fixed point attractor, $attr$. \\
  \IF{ $n_1$ $\textgreater$ ~$n_2$}
  \STATE Success = Success + $n_1$ and store $attr$ in {\em attrset-1}.
 \ELSE 
 \STATE Success = Success + $n_2$ and store $attr$ in {\em attrset-2}.
    \ENDIF
      \STATE  Repeat {\em Step 5} to {\em Step 10} for each fixed point.
       \STATE Find efficiency as $\frac{Success}{|P_1| + |P_2|}$ 
        \STATE Report maximum efficiency ACA and its class of attractors.
 \end{algorithmic}
 \end{algorithm}

\begin{table}
\caption{Efficiencies of ACAs during training of monk-1 data-set }   
\label{perCA} 
\centering
\scalebox{0.79}{      
\begin{tabular}{|c|c|c|c|c|c|c|c|c|}\hline   
ACAs & Efficiency & number of & ACAs & Efficiency & number of & ACAs & Efficiency & number of \\  
  & (in \%)&attractors &  &(in \%)&attractors& &(in \%)&attractors\\ \hline  
4& 95.671 & 199 & 5 & 73.38 & 22 & 12 & 95.16 & 199 \\ \hline  
13& 73.38 & 22 & 36 & 85.48 & 67 & 44 & 84.67 & 67 \\ \hline  
68& 94.35 & 199 & 69 & 69.35 & 22 & 72 & 73.38 & 67 \\ \hline  
77 & 83.06 & 198 & 78 & 70.96 & 23 & 79 & 68.54 & 22 \\ \hline
92 & 73.38 & 23 & 93 & 73.38 & 22 & 94 & 71.77 & 23 \\ \hline  
95 & 70.96 & 22 & 100 & 81.45 & 67 & 104 & 70.16 & 34 \\ \hline  
128 & 50.00 & 2 & 130 & 50.00 & 2 & 132 & 89.51 & 200  \\ \hline  
133 & 71.77 & 23 & 136 & 50.00 & 2 & 138 & 50.00 & 2 \\ \hline  
141 & 72.58 & 23 & 144 & 50.00 & 2 &146 & 50.00 & 2 \\ \hline
152 & 50.00 & 2 & 160 & 52.41 & 2 & 162 & 56.45 & 2 \\ \hline   
164 & 82.25 & 68 & 168 & 52.41 & 2 & 170 & 60.48 & 2 \\ \hline  
172 & 83.87 & 68 & 174 & 50.00 & 2 & 176 & 53.22 & 2 \\ \hline  
178 & 62.09 & 2 & 182 & 50.00 & 2 & 184 & 62.60 & 2  \\ \hline  
186 & 55.64 & 2 & 188 & 50.00 & 2 & 190 & 50.00 & 2  \\ \hline  
192 & 50.00 & 2 & 194 & 50.00 & 2 & 197 & 69.35 & 23 \\ \hline 
202 & 75.00 & 68 & 203 & 77.41 & 67 & 207 & 91.93 & 199 \\ \hline  
208 & 50.00 & 2 & 216 & 77.41 & 68 & 217 & 86.29 & 67   \\ \hline  
218 & 83.06 & 68 & 219 & 83.06 & 67 & 221 & 91.93 & 199  \\ \hline  
222 & 87.90 & 200 & 223 & 93.54 & 199 & 224 & 52.41 & 2 \\ \hline  
226& 58.870 & 2& 228 & 81.45 & 68 & 232 & 85.48 & 200 \\ \hline   
234 & 51.61 & 2 & 237 & 71.77 & 67 & 238 & 50.00 & 2 \\ \hline  
240 & 59.67 & 2& 242 & 53.22 & 2 & 244 & 50.00 & 2 \\ \hline  
246 & 50.00 & 2 & 248 & 51.61 & 2 & 250 & 51.61 & 2  \\ \hline  
252 & 50.00 & 2& 254 & 50.00 & 2 &- &- & -  \\ \hline  
\end{tabular} }     
\end{table}  

\subsection{Testing Phase} 
In the {\em testing phase} the proposed classifier is tested with a new set of patterns. We use Algorithm \ref{algoclatest} to test the designed classifier. The classifier, two pattern sets ($P_1$ and $P_2$) and the class of attractors ({\em attrset-1} and {\em attrset-2}) are the input to Algorithm \ref{algoclatest}. The ACA is loaded with the patterns of $P_1$ and $P_2$ and updated till all the patterns converge to any fixed point attractor. The efficiency of the ACA is determined considering the number of patterns correctly identified by the classifier. For example, if an attractor is present in {\em attrset-1} then the algorithm counts only the number of patterns from data set $P_1$ converge to the attractor as correctly identified patterns. The ACAs with their training and testing efficiencies for different data sets are reported in Table \ref{perdata}. 

The proposed classifier is an ACA based classifier, so the cells of ACA are updated stochastically. So, there is a chance that the efficiency of the classifier can vary in different run. So, this variations in efficiencies of the classifier need to be reported. To overcome this problem of variation in efficiencies, we find the {\em margin of error} in both {\em training} and {\em testing} phase. 

\begin{algorithm}
 \caption{Testing Classifier}
 \begin{algorithmic}[1]
 \label{algoclatest}
 \renewcommand{\algorithmicrequire}{\textbf{Input:}}
 \renewcommand{\algorithmicensure}{\textbf{Output:}}
 \REQUIRE ACA, {\em attrset-1}, {\em attrset-2}, pattern sets $P_1$ and $P_2$, 
 \ENSURE Testing efficiency.
 \\ \textit{Initialization} : Success = 0.
  \STATE Repeat {\em Step 2} to {\em Step 4} for each pattern $p$ of $P_1$ and $P_2$.
   \STATE Load ACA with  $p$.
    \STATE Run ACA until it reaches to any fixed point attractor, $attr$.
     \STATE Suppose $n_1$ and $n_2$ are the number of patterns from $P_1$ and $P_2$ respectively mapped into an fixed point attractor, $attr$. 
     \IF{ $attr$ $\in$ {\em attrset-1}} 
      \STATE Success=Success + $n_1$    
        \ELSIF {$attr$ $\in$ {\em attrset-2}} 
         \STATE     Success=Success + $n_2$ 
          \ELSIF { ($attr$ $\notin$ {\em attrset-1}) $\&$ ($attr$ $\notin$ {\em attrset-2})}     
                \IF {$n_1$ $>$ $n_2$} 
                \STATE Success = Success + $n_1$ 
               \ELSE 
                \STATE Success = Success + $n_2$
                \ENDIF 
     \ENDIF 
     \STATE Repeat {\em Step 4} to {\em Step 15} for each fixed point attractor.
       \STATE Efficiency = $\frac{Success}{|P_1| + |P_2|}$ 
        \STATE Report Testing efficiency of the ACA.
 \end{algorithmic}
 \end{algorithm}

\subsection{Margin of Error} 

A {\em margin of error} expresses the maximum expected difference between the true population parameter and a sample estimate of that parameter \cite{William}.  We estimate the {\em margin of error} for sample size $m$ using Equation \ref{mr} \cite{William}. We have considered 30 samples for the experimentation.

\begin{equation}
\label{mr}
M\_ Error=Z_{n/2}~(\frac{S}{\sqrt m}) 
\end{equation}
where S is the variance and S can be found from Equation \ref{var}.

\begin{equation}
\label{var} 
S=\sqrt{\sum(x_i-\overline x)^2/(m-1)}
\end{equation}
Where $x_i$ is the efficiency of $i^{th}$ sample and $\overline x$ is the mean of efficiencies of samples. We set $Z_{n/2}$= 1.96, as we consider the confidence level is 95\% for our sampling experimentation \cite{William}.

The {\em margin of error} in both {\em training} and {\em testing} phase for the efficiencies of the proposed classifier are reported in Table \ref{perdata}. It is found that the {\em margin of error} is very less in both {\em training} and {\em testing} phase. Hence, the efficiencies of the proposed classifier varies a little during different run of the classifier algorithm.

\subsection{The Comparison}
We have used six data-sets for the study of effectiveness of the proposed pattern classifier. These data sets are Monk-1, Monk-2, Monk-3, Haber-man, Spect heart and Tic-tac-toe. These data-sets are modified suitably and fit with the input characteristics of our proposed pattern classifier. Our proposed classifier is a two-class classifier. So, all the data-sets we have used are of two classes. 

Table \ref{perCA} shows the efficiencies and the number of attractors of the candidate ACAs. The efficiencies reported in this table are found using Monk-1 data set during training phase. It is also noted that, the efficiency of classifier (in training) changes if data-set changes. 

Table \ref{perdata} shows the performance results of our proposed pattern classifier. The efficiencies for the proposed classifiers using different data sets in both {\em training} and {\em testing} phase with their {\em margin of errors} are also noted in this table. Column 1 shows the name of data-set while column 2 reports the size of ACA. The efficiencies of the classifier with their {\em margin of errors} during training are reported in next two columns and efficiencies during testing with {\em margin of errors} are reported in column 5 and column 6 respectively. The ACA, as the proposed classifier is reported in the last column. Finally, the performance of our proposed ACA based classifier is compared with other well known classifiers in Table  \ref{compdata}. We show that our proposed ACA based classifier performs much better than traditional CA based classifier and also more competitive and performs reliably better than other well known classifier algorithms. 

\begin{table*}
\caption{Performance of proposed classifier}   
\label{perdata} 
\centering 
\scalebox{0.8}{   
\begin{tabular}{|c|c|c|c|c|c|c|}\hline   
Data-sets & ACA size& \multicolumn{4}{|c|}{Efficiency in \%} & ACA rules  \\  \cline{3-6}
 &  & Training& Margin of error & Testing&Margin of error & (Proposed) \\  
 &    &         & in Training &  &  in Testing& \\ \hline
Monk-1 & 11 & 95.671 & 0.482 & 81.519 &0.254 &4  \\ \hline       
Monk-2 & 11 & 88.934 & 0.352 & 73.410 &0.305 &68  \\ \hline 
Monk-3 & 11 & 96.310 & 0.485 & 83.749& 0.134 &207 \\ \hline   
Haber Man & 9 & 82.179 & 0.405 & 77.493& 0.773 &36 \\ \hline  
Spect Heart & 22 & 100.000 & 0 & 100.000 & 0 & 4 \\ \hline   
Tic-Tac-Toe & 18 & 99.867 & 0.059 & 99.721& 0.040 &12 \\ \hline   
\end{tabular} }    
\end{table*}  

\begin{table}
\caption{Comparison of classification accuracy}   
\label{compdata} 
\centering 
\scalebox{1.0}{  
\begin{tabular}{|c|c|c|c|}\hline   
Data-sets & Algorithms &  Efficiency      
 & Efficiency  \\  
 &  &in \% &in \% (proposed) \\ 
 &   &    &  with margin of error \\ \hline
Monk 1&Bayesian & 99.9& 81.519 $\pm$ 0.254 \\
      &C4.5&100& (rule 4)\\ 
      &TCC&100&\\
      &MTSC&98.65& \\
      &MLP&100&\\
      &Traditional CA&61.111&\\
     \hline
Monk 2 & Bayesian & 69.4& 73.410 $\pm$ 0.305 \\ 
      &C4.5&66.2&(rule 68)\\ 
      &TCC&78.16&\\
      &MTSC&77.32& \\
      &MLP&75.16&\\
      &Traditional CA&67.129&\\
      \hline
Monk 3 &Bayesian & 92.12& 83.749 $\pm$ 0.134 \\ 
      &C4.5&96.3&(rule 207)\\ 
      &TCC&76.58&\\
      &MTSC&97.17& \\
      &MLP&98.10&\\ 
      &Traditional CA&80.645&\\
     \hline
Haber-man &Traditional CA & 73.499 & 77.493 $\pm$ 0.773\\ 
& & & (rule 36) \\ \hline       
Spect Heart&Traditional CA &91.978 &100 $\pm$ 0.0\\ 
 & & & (rule 4) \\ \hline
 
Tic-Tac-Toe&Sparse grid&98.33&99.721 $\pm$ 0.040\\ 
      &ASVM&70.000&(rule 12) \\
      &LSVM&93.330& \\
      &Traditional CA&63.159&\\
      \hline  
\end{tabular}}  
\end{table}  

\section{Conclusion}
\label{conclu}
In this work we have proposed a design of ACA based pattern classifier and compared the efficiency of the designed classifier with other well known existing classifiers. For this design, we have used ACAs under fully asynchronous update scheme in periodic boundary condition. ACAs are characterized for their convergence towards any fixed point attractors. A theorem has been designed to identify such convergent ACAs which are used for the pattern classification. 146 ACAs (out of 256) are identified, that converge to some fixed point attractors during their evolution. The concept of {\em fixed point graph} has been introduced to facilitate the counting of fixed points in an ACA. An algorithm is also designed which counts the number of fixed points utilizing FPG and identifies ACAs with multiple attractors. 84 ACAs (out of 146) are identified with multiple fixed points. An experimental study is made on the convergence time of ACAs with a designed algorithm. All convergent ACAs (146 ACAs) with their convergence time are reported in this work. ACAs with exponential convergence time and ACAs with huge number of fixed points are not allowed as pattern classifier. These ACAs are not good candidates for pattern classification. 71 ACAs are identified as candidate ACAs which are used as pattern classifier. Both training and testing of the classifier have been done considering widely accepted data sets. Since there is a chance of change in the efficiency of the ACA based classifier, we have also calculated the margin of error in both {\em training} and {\em testing} phase. It has been shown that the margin of errors in both the phases are very less with a negligible effect in the efficiency of the classifier. Finally, in this work we have also compared the efficiency of our proposed ACA based classifier with some well known existing classifiers. We report that our proposed classifier performs better than the traditional CA based classifier and also more competitive and performs reliably better than some of the other well known classifiers.   
\bibliographystyle{elsarticle-num}
\bibliography{References} 

\begin{thebibliography}{10}
\expandafter\ifx\csname url\endcsname\relax
  \def\url#1{\texttt{#1}}\fi
\expandafter\ifx\csname urlprefix\endcsname\relax\def\urlprefix{URL }\fi
\expandafter\ifx\csname href\endcsname\relax
  \def\href#1#2{#2} \def\path#1{#1}\fi

\bibitem{Neuma66}
J.~V. Neumann, The theory of self-reproducing Automata, {A}. {W}. {B}urks ed.,
  Univ. of {I}llinois {P}ress, {U}rbana and {L}ondon, 1966.

\bibitem{Wolframbook1}
S.~Wolfram, A {N}ew {K}ind of {S}cience, Wolfram Media, Inc., 2002.

\bibitem{LangtonIII}
C.~G. Langton, Self-reproduction in cellular automata, Physica D 10 (1984)
  135--144.

\bibitem{langton86}
C.~G. Langton, Studying artificial life with cellular automata, Physica D 22
  (1986) 120--149.

\bibitem{Burks70}
A.~W. Burks, Essays on cellular automata, Tech. rep., Univ. of Illinois, Urbana
  (1970).

\bibitem{chris2004}
C.~Salzberg, H.~Sayama, Heredity, {C}omplexity, and {S}urprise: Embedded
  self-replication and evolution in {CA}, in: Proceedings of International
  conference on cellular automata for research and industry (ACRI), Springer,
  2004, pp. 161--171.

\bibitem{reggia1998}
J.~A. Reggia, H.~H. Chou, J.~D. Lohn, Cellular automata models of
  self-replicating systems, Advances in computers 47 (1998) 141--183.

\bibitem{Chopard}
B.~Chopard, M.~Droz, {C}ellular {A}utomata {M}odelling of {P}hysical {S}ystems,
  {C}ambridge {U}niversity {P}ress, 1998.

\bibitem{Codd68}
E.~F. Codd, Cellular {A}utomata, Academic Press Inc., 1968.

\bibitem{palash3}
P.~Sarkar, A brief histroy of cellular automata, Acm Computing Surveys 32~(1)
  (2000) 80--107.

\bibitem{Maxgarzon}
M.~Garzon, Models of massive parallelism: Analysis of cellular automata and
  neural networks, Springer, 1995.

\bibitem{mitchell93}
M.~Mitchell, P.~T. Hraber, J.~P. Crutchfield, Revisiting the egde of chaos:
  Evolving cellular automata to perform computations, Complex Systems (1993) 7:
  89--130.

\bibitem{toffoli90}
T.~Toffoli, N.~Margolus, Invertible cellular automata: {A} review, Physica D 45
  (1990) 229.

\bibitem{Thatcher}
J.~Thatcher, Universality in {V}on {N}eumann cellular model, in: Tech. Report
  03105-30-T, ORA, University of Michigan, 1964.

\bibitem{wolfram84b}
S.~Wolfram, Universality and complexity in cellular automata, Physica D 10
  (1984) 1--35.

\bibitem{jarkko2005}
J.~Kari, Theory of cellular automata: A survey, Theoretical computer science
  (Elsevier) 334 (2005) 3--33.

\bibitem{gutowitz91}
H.~Gutowitz, Cellular Automata: Theory and Experiment, MIT Press/Bradford
  Books, Cambridge Mass., 1991, {I}SBN 0-262-57086-6.

\bibitem{cook2004}
M.~Cook, Universality in elementary cellular automata, Complex systems 15
  (2004) 1--40.

\bibitem{Wolfr83}
S.~Wolfram, Statistical mechanics of cellular automata, Rev. Mod. Phys. 55~(3)
  (1983) 601--644.

\bibitem{wolfram86}
S.~Wolfram, Theory and applications of cellular automata, World Scientific,
  Singapore, 1986, {I}SBN 9971-50-124-4 pbk.

\bibitem{wolfram1984}
S.~Wolfram, Cellular automata as models of complexity, Nature 311~(5985) (1984)
  419--424.

\bibitem{gianluca1995}
G.~Tempesti, A new self-reproducing cellular automaton capable of construction
  and computation, in: Proceedings of 3rd European conference on artificial
  life, no. 929, Springer, 1995.

\bibitem{gorodkin1993}
J.~Gorodkin, A.~Sorensen, O.~Winther, Neural network and cellular automata
  complexity, Complex systems 7 (1993) 1--23.

\bibitem{hoekstra2010}
A.~G. Hoekstra, J.~Kroc, P.~M.~A. Sloot, Introduction to modeling of complex
  systems using cellular automata, springer, 2010.
\newblock \href {http://dx.doi.org/10.1007/978-3-642-12203-3-1}
  {\path{doi:10.1007/978-3-642-12203-3-1}}.

\bibitem{Meinhardt1995}
H.~Meinhardt, The Algorithmic beauty of sea shells, springer, 1995.

\bibitem{patel2015}
E.~L. Patel, D.~Broomhead, A max-plus model of asynchronous cellular automata,
  arXiv:1502.04097 (2015) 1--26.

\bibitem{Fates14}
N.~Fat{\`{e}}s, Guided tour of asynchronous cellular automata, J. Cellular
  Automata 9~(5-6) (2014) 387--416.

\bibitem{damien2009}
D.~Regnault, N.~Schabanel, E.~Thierry, Progresses in the analysis of stochastic
  2{D} cellular automata: A study of asynchronous 2{D} minority, Theoretical
  Computer Science (Elsevier) 410 (2009) 4844--4855.

\bibitem{bersinietal94}
H.~Bersini, V.~Detour, Asynchrony induces stability in cellular automata based
  models, in: R.~A. Brooks, P.~Maes (Eds.), Artificial Life {IV}, The MIT
  Press, Cambridge, Massachusetts, 1994, pp. 382--387.

\bibitem{Inger84}
T.~Ingerson, R.~Buvel, Structure in asynchronous cellular automata, Physica D:
  Nonlinear Phenomena 10~(1-2) (1984) 59--68.

\bibitem{Nehaniv2003}
C.L.Nehaniv, Evolution in asynchronous cellular automata, in: Proceedings of
  eighth international conference on artificial life, MIT Press, 2003, pp.
  65--73.

\bibitem{Maji1}
P.~Maji, N.~Ganguly, S.~Saha, A.~K. Roy, P.~P. Chaudhuri, {C}ellular {A}utomata
  {M}achine for {P}attern {R}ecognition, {P}roceedings of {F}ifth
  {I}nternational {C}onference on {C}ellular {A}utomata for {R}esearch and
  {I}ndustry, {ACRI}, {S}witzerland (2002) 270--281.

\bibitem{Huang5}
F.~Peper, T.~Isokawa, N.~Kouda, N.~Matsui, Self timed cellular automata and
  their computational ability, Future generation computer systems (elsevier) 18
  (2002) 893--904.

\bibitem{Huang4}
X.~Huang, Q.~Zhu, Self reproduction of worms in asynchronous cellular automata,
  Journal of software (Academy publishing) 8 (2013) 1699--1706.

\bibitem{Huang3}
X.~H.~Q. Zhu, A novel triggered self reproduction of self reproduction loops in
  asynchronous cellular automata, Journal of information and computational
  science 9~(16) (2012) 4961--4968.

\bibitem{Huang1}
X.~Huang, J.~Lee, R.~L. Yang, Q.~S. Zhu, Simple and flexible self-reproducing
  structures in asynchronous cellular automata and their dynamics,
  International Journal of modern physics C 24~(1350015).

\bibitem{Ytakada}
Y.~Takada, T.~Isokawa, F.~Peper, Asynchronous self-reproducing loops with
  arbitration capability, Physica D: Nonlinear phenomena 227 (2007) 26--35.

\bibitem{Bolt2015}
W.~Bolt, J.~M. Baetens, B.~D. Baets, On the decomposition of stochastic
  cellular automata, arXiv:1503.03318 (2015) 1--27.

\bibitem{DasMNS09}
S.~Das, S.~Mukherjee, N.~Naskar, B.~K. Sikdar, Characterization of single cycle
  ca and its application in pattern classification, Electr. Notes Theor.
  Comput. Sci. 252 (2009) 181--203.

\bibitem{das:modeling}
S.~Das, S.~Mukherjee, N.~Naskar, B.~K. Sikdar, Modeling single length cycle
  nonlinear cellular automata for pattern recognition, in: NaBIC, 2009, pp.
  198--203.

\bibitem{NiloyPAMI}
N.~Ganguly, P.~Maji, B.~K. Sikdar, P.~P. Chaudhuri, {D}esign of a {C}ellular
  {A}utomata {B}ased {P}attern {C}lassifier, {T}ransaction on {P}attern
  {A}nalysis and {M}achine {I}ntelligence, {TPAMI}~(116429).

\bibitem{NiloyI}
N.~Ganguly, P.~Maji, A.~Das, B.~K. Sikdar, P.~P. Chaudhuri, {C}haracterization
  of {N}on-{L}inear {C}ellular {A}utomata {M}odel for {P}attern {R}ecognition,
  in: {P}roceedings of {AFSS} {I}nternational {C}onference on {F}uzzy
  {S}ystems, {C}alcutta, {I}ndia, 2002, pp. 214--220.

\bibitem{hpcs}
B.~Sethi, S.~Das, Modeling of asynchronous cellular automata with fixed-point
  attractors for pattern classification, in: Proceedings of International
  Conference on High Performance Computing and Simulation, Finland, IEEE, 2013,
  pp. 311--317.

\bibitem{nazim}
N.~Fat{\`e}s, E.~Thierry, M.~Morvan, N.~Schabanel, Fully asynchronous behavior
  of double-quiescent elementary cellular automata, Theor. Comput. Sci.
  362~(1-3) (2006) 1--16.

\bibitem{tamc2014}
B.~Sethi, N.~Fat{\`e}s, S.~Das, Reversibility of elementary cellular automata
  under fully asynchronous update, in: Proceedings of International conference
  on Theory and Applications of Models of Computation, India, Springer, 2014,
  pp. 39--49.

\bibitem{automata2013}
B.~Sethi, S.~Roy, S.~Das, Experimental study on convergence time of elementary
  cellular automata under asynchronous update, in: Proceedings of International
  Workshop on Cellular Automata and Discrete Complex Systems (Automata), 2013,
  pp. 223--228.

\bibitem{William}
W.~G. Cochran, Sampling Techniques, third edition, Vol. ISBN 978-81-265-1524-0,
  John Wiley \& Sons, 1977.

\bibitem{Catherine}
C.~McGeoch, P.~Sanders, R.~Fleischer, P.~R.Cohen, D.~Precup, Using finite
  experiments to study asymptotic performance, Experimental Algorithmics,
  LNCS~(2547) (2002) 93--126.

\end{thebibliography}
\end{document}